\documentclass[english, 11pt]{article}
\usepackage[T1]{fontenc}
\usepackage[latin9]{inputenc}
\usepackage{amsmath,amssymb,amsthm}
\usepackage{babel}
\usepackage{fullpage}
\newtheorem{thm}{Theorem}{}{}
\newtheorem{lem}{Lemma}{}{}
{}{}

\newcommand{\beq}{\begin{equation}}
\newcommand{\eeq}{\end{equation}}
\newcommand{\commentout}[1]{}
\newcommand{\be}{\begin{enumerate}}
\newcommand{\ee}{\end{enumerate}}
\newcommand{\RR}{\mathbb{R}}

\usepackage{color}
\usepackage[normalem]{ulem}
\DeclareMathOperator{\conv}{conv}
\DeclareMathOperator{\COR}{COR}
\DeclareMathOperator{\NOR}{NOR}
\DeclareMathOperator{\NOT}{NOT}
\DeclareMathOperator{\poly}{poly}
\DeclareMathOperator{\Tr}{Tr} 

\newcommand{\copositive}{C}
\newcommand{\completelypositive}{C^*}
\newcommand{\PSDcone}{C_\mathrm{PSD}}

\begin{document}

\title{Generalised probabilistic theories and conic extensions of polytopes}

\author{Samuel Fiorini\footnotemark[1]\and
	Serge Massar\footnotemark[2] \and
	Manas K. Patra\footnotemark[2] \and
	Hans Raj Tiwary\footnotemark[3]}

\renewcommand{\thefootnote}{\fnsymbol{footnote}}
\footnotetext[1]{Department of Mathematics, Universit\'e libre de Bruxelles (ULB), Belgium. \texttt{sfiorini@ulb.ac.be}}
\footnotetext[2]{Laboratoire d'Information Quantique, CP225, Department of Physics, Universit\'e libre de Bruxelles (ULB), Belgium. \texttt{\{smassar,manas.kumar.patra\}@ulb.ac.be}}
\footnotetext[3]{Department of Applied Mathematics, Charles University, Prague, Czech Republic. \texttt{hansraj@kam.mff.cuni.cz}}

\date{}

\maketitle

\begin{abstract}
Generalized probabilistic theories (GPT) provide a general framework that includes classical and quantum theories. It is described by a cone $C$ and its dual $C^*$. We show that whether some one-way communication complexity problems can be solved within a GPT is equivalent to the recently introduced cone factorisation of the corresponding communication matrix $M$. We also prove an analogue of Holevo's theorem: when the cone $C$ is contained in $\mathbb{R}^{n}$, the classical capacity of the channel realised by sending GPT states and measuring them is bounded by $\log n$.

Polytopes and optimising functions over polytopes arise in many areas of discrete mathematics. A conic extension of a polytope is the intersection of a cone $C$ with an affine subspace whose projection onto the original space yields the desired polytope. Extensions of polytopes can sometimes be much simpler geometric objects than the polytope itself. The existence of a conic extension of a polytope is equivalent to that of a cone factorisation of the slack matrix of the polytope, on the same cone.

We show that all $0/1$ polytopes whose vertices can be recognized by a polynomial size circuit, which includes as a special case the travelling salesman polytope and many other polytopes from combinatorial optimisation, have small conic extension complexity when the cone is the completely positive cone.

Using recent exponential lower bounds on the linear extension complexity of polytopes, this provides an exponential gap between the communication complexity of GPT based on the completely positive cone and classical communication complexity, and a conjectured exponential gap with quantum communication complexity.

Our work thus relates the communication complexity of generalisations of quantum theory to questions of mainstream interest in the area of combinatorial optimisation.

\end{abstract}

\section{Introduction}

Generalised Probabilitic Theories (GPT)~\cite{Mackey63,Davies70,Edwards71,Foulis81,Ludwig85,Hardy01,Barrett07,Chiribella10,Chiribella11,Masanes11}
are a framework that allows generalisations of both classical and quantum theories.  
In its simplest form a GPT is given by a closed convex cone $C$ that defines the state space, by the dual cone $C^*$ that defines the measurement space, and by a unit $u\in C^*$ that normalises the states.
Upon adding a sufficient set of axioms one restricts to classical or quantum theory. But using only a subset of the axioms provides a framework in which more general theories can be studied. 
Many phenomena considered uniquely quantum, such as no-cloning and no-broadcasting, trade-off between state disturbance and measurement, properties associated with entanglement, teleportation, remote steering of ensembles, and properties of entropy, already appear at the level of GPT, see e.g.~\cite{Klay87,Klay88,Barrett07,Barnum06,Barnum07,Barnum08, Barnum09,Barnum10}.
Related lines of enquiry have shown that non local theories obeying no-signalling have ``quantum'' properties such as intrinsic randomness, impossibility of cloning, secret key generation, see e.g.~\cite{Masanes06,BHK,Acin06}.

 Ideally one would hope to find a set of simple and physically intuitive axioms that naturally restrict GPT to quantum theory~\cite{Fuchs02}. Information and complexity theory provide a possible line of approach by providing criteria that can be used to rule out classes of theories. The development of quantum information~\cite{Nielsen10} shows that perfectly consistent complexity theories alternative to classical  are possible. On the other hand it has been shown that unlimited supply of maximally non local boxes makes communication complexity trivial~\cite{VanDam13,Kaplan11}, which can be taken as an argument for why such correlations are not physical. More recently, the principle of information causality was shown to be violated by many non local correlations~\cite{Pawlowski09}.

Independently of the above, considerable work has been devoted to understanding the geometry and extension complexity of polytopes~\cite{Conforti, Kaibel11, Wolsey11}. For instance, the polytope associated with the Travelling Salesman Problem (TSP)  is the convex hull of all points in $ \{ 0,1 \}^{ {n}\choose{2}}$ that correspond to a Hamiltonian cycle in the complete $n$-vertex graph $K_n$. Solving the TSP is equivalent to linear optimisation over the TSP polytope. Representing the set of feasible solutions of a problem by a polytope forms the basis of a standard
and powerful methodology in combinatorial optimisation, see, e.g., \cite{Schrijver}.

Many polytopes of interest have exponentially many facets, which makes them difficult to use directly. An extension (or lift) of a polytope is a geometric object in a larger dimensional space whose projection onto the original space yields the desired polytope. This is related to the concept of extended formulation, which refers to the description of the extension, here a system of linear equations plus one (conic) constraint. Linear extensions of polytopes are given in terms of linear programs. Semidefinite and conic extensions of polytopes are given in terms of semidefinite programs and conic programs.  The extension may be much simpler than the original polytope, see e.g.~\cite{Conforti}. This motivates the definition of the linear (semidefinite; conic) extension complexity of a polytope as the minimum size of a linear (semidefinite; conic) program expressing the polytope, in terms of the dimension of the cone. When the extension complexity is small, optimisation problems that seem difficult over the original polytope may become simple over the extended formulation.

It was shown in~\cite{Yannakakis,Gouveia} that the existence of a linear (semidefinite; conic) extension  of a polytope is essentially equivalent to certain  linear (PSD; conic) factorisations of a matrix associated to the polytope, called the slack matrix. The slack matrix records for each pair $(v,F)$ of vertex $v$ and facet $F$ of the polytope the corresponding algebraic distance. Specifically the matrix $M$ has a cone factorisation $M = TU$ if $T$ is a matrix whose rows belong to the cone $C$ and $U$ is a matrix whose columns belong to the dual cone $C^*$. When the cone $C$ is the nonnegative orthant or the cone of PSD matrices (positive semidefinite matrices), one obtains the nonnegative and PSD factorisations of the matrix.

As shown in~\cite{Faenza,Fiorini}, the size of a nonnegative (PSD) factorisation of the matrix $M$ is equal, up to a small additive constant, to the number of classical (quantum) bits that must be sent in a randomized one-way communication complexity scenario with nonnegative outputs that computes the matrix in expectation. Conversely, the existence of such a communication complexity protocol implies the corresponding factorisation of the matrix $M$. 
Note that the communication complexity scenario used here differs from the one most often used in the literature since on the one hand we require that the matrix $M$ be reproduced exactly (we tolerate no error), but on the other hand it must only be reproduced on average (the protocol could for instance output $0$ a large fraction of the time).

Inspired by the connection between PSD factorisation and quantum communication complexity, it was shown in~\cite{Fiorini} that the linear extension complexity of some important polytopes from combinatorial optimisation, including the correlation polytope and TSP polytope, is exponential. Extensions and strengthening's of this result can be found in~ \cite{Braun12,Braverman,Braun,Avis,Pokutta,Chan_et_al13}. 

Understanding the semidefinite extension complexity of polytopes is an important research question~\cite{Gouveia,Fiorini,Gouveia12,Gouveia13}. In particular since semidefinite programming is in P, a small semidefinite extension of the TSP polytope with efficiently computable coefficients would imply that P $=$ NP. It is therefore reasonable to conjecture that the semidefinite extension complexity of polytopes such as the TSP polytope is exponential. This conjecture is supported by the counting argument of~\cite{Briet} (based on the earlier work of \cite{RothvoS}) that shows that some $0/1$ polytopes have large semidefinite extension complexity.

In the present work we connect the above two areas of study.
First we give an operational meaning to the cone factorisation of a matrix $M$, for an arbitrary cone $C$: it is equivalent (up to the communication of a single classical bit) to the existence of a randomized one-way communication complexity scenario with nonnegative outputs that computes the matrix $M$ in expectation when the states and measurements are described by the GPT associated to cone $C$. This generalises the operational interpretation of the nonnegative and PSD factorisations in terms of classical and quantum communication complexity.

In order to understand the implications of this result, it is important to have an upper bound on how  much classical information can be stored in a state of a GPT. The analogous result stating that at most one classical bit can be stored in a quantum bit is known as Holevo's theorem~\cite{Holevo}, and underlies much of quantum information theory.
Indeed only in the presence of such a bound can one give meaning to communication complexity of the corresponding GPT.
Our second result is to provide such a bound: namely we show that if a GPT is  associated to a cone $C \subset \mathbb{R}^{n}$, then the states of this GPT can store at most $\log n$ classical bits.
To prove this result, we use the fact  that the space of measurements in a GPT is convex, and then prove that the extremal  measurements have at most $n$ non-zero outcomes. This characterisation of GPT extremal measurements is to our knowledge new and of interest in itself. It generalises a well known characterisation of extremal quantum measurements~\cite{DPP}.

We then consider the specific case of the copositive cone $\copositive_n=\{ X \mid y^\intercal X y \geq 0 , \forall y\in \mathbb{R}^{n}_+\}$ and its dual the completely positive cone $\completelypositive_n=\{ X=\sum_{i=1}^k y_i y_i^\intercal  \mid y_i\in \mathbb{R}^{n}_+\}$ (throughout, the elements of $\mathbb{R}^n$ are column vectors). From the point of view of quantum information, the completely positive cone $\completelypositive_n$ can be viewed as the space of $n\times n$ density matrices that can be expressed as convex combinations of pure states with real nonnegative coefficients (in a specific basis). It is well known that completely positive (copositive) programming, that is, maximising a linear function over the intersection of an affine subspace and the completely positive (copositive) cone, is NP-hard \cite{BDKRT2000}. 
In other words, the very complicated geometry of the completely positive and copositive cones allows one to efficiently encode NP-complete decision problems.

We show here that all polynomially definable $0/1$ polytopes have polynomial size completely positive extension complexity. Such a polytope is a polytope whose vertices form a subset of $\{0,1\}^d$ that can be recognized by a circuit of size $\poly(d)$. To prove this result we proceed in two steps. First we show, extending the work of Maximenko \cite{Maximenko12}, that all polynomially definable $0/1$ polytopes are projections of faces of the correlation polytope $\COR(n)= \conv\{aa^\intercal  \mid a \in \{0,1\}^n\}$, with $n =\poly(d)$. Second, exploiting results of Burer~\cite{Burer09}, we show that the correlation polytope has a polynomial size completely positive extension, that is $\COR(n)$ is given by the projection of the intersection of $\completelypositive_{\poly(n)}$ with an affine subspace. This result is interesting by itself because small completely positive (or copositive) programming formulations have been found for a large number of combinatorial optimisation problems, see, e.g.~\cite{QKRT98,BDKRT2000,KP02,Burer09}. We show that virtually all combinatorial optimisation problems share this property. In fact, Burer~\cite{Burer09} asks: ``Other than the handful of problems listed above, what types of problems can be represented as COPs [copositive programs] or as CPPs [completeley positive programs]?''. Our result can be viewed as an answer to this question: all combinatorial optimisation problems (integer linear programs with 0/1 variables) such that testing the feasibility of a solution can be done efficiently, can be efficiently represented as completely positive programs.

Using the correspondence outlined above, this result implies an exponential gap between the communication complexity of GPT based on the completely positive cone and classical communication complexity. In view of the very plausible conjecture mentioned above, one also anticipates an exponential gap with quantum communication. 

We now return to the problem of introducing sets of axioms that reduce to quantum theory. In general there will be an interplay between structural axioms that define the mathematical framework (e.g. that states are vectors in $\mathbb{R}^{n}$), physical axioms (e.g. that convex combinations of states are states), and information theoretic axioms (e.g. that certain communication complexity tasks are impossible, or that secure key distribution is possible). In particular the present work suggests that even at the very basic level of GPT, complexity arguments could be used to rule out certain theories. Indeed our results show that GPT based on the completely positive cone $\completelypositive_n$ provides exponential saving over classical (conjectured quantum) communication, and this could be used to rule out this theory. (There are probably many other reasons to rule out GPT based on $\completelypositive_n$, but these would invoke other axioms, related for instance to transformations between states).
The present work can thus be viewed as a step along the program of \cite{Fuchs02,Brassard05} who wish to use as much as possible information theory type axioms to restrict possible physical theories.

As a concluding remark, we note that there have already been a number of results in classical complexity that were obtained through quantum arguments, or inspired by quantum information, see e.g.~\cite{Kerenidis,Aaronson,Fiorini} and the review~\cite{Drucker}. Here the same kind of connection occurs, but with ideas and arguments inspired by the foundations of quantum mechanics, and in particular generalisations of  quantum theory. The connection arose very naturally during the development of the present work: we first realised that the recently introduced cone factorisation of matrices could be given an operational interpretation within the context of GPT, and then explored to what exent the completely postive cone would provide an interesting example, which finally lead to new results in combinatorial optimisation.

The reader mainly interested in the foundation of physics aspects should concentrate on Sections 2 to 5. On the other hand, the reader interested in the combinatorial optimisation aspects should go to the self contained Section 6.

\section{Generalised Probabilistic Theories}

\subsection{General formulation}

We work in $\mathbb{R}^{n}$ with the usual scalar product which we
denote $\langle\cdot,\cdot\rangle$. Let $C\subset\mathbb{R}^{n}$ be a proper cone (i.e. $C$ is a closed, pointed and full-dimensional cone), and denote by $C^{*} = \{x \in \mathbb{R}^n \mid \forall y \in C : \langle x, y \rangle \geq 0\}$ its dual. Notice that $C^*$ is again a proper cone.

An element $\omega\in C$ is an \emph{unnormalised state}. An element $e\in C^{*}$ is an \emph{unnormalised effect}. The \emph{unit effect} $u\in C^{*}$ is an interior point of the dual cone. Thus $\langle u,\omega \rangle > 0$ for every non-zero $\omega\in C$. 

\emph{Normalised states} are states $\omega\in C$ such that $\langle\omega,u\rangle=1$. 
Any unnormalised state $\omega\neq0$ can be rescaled $\omega\to \omega / \langle\omega,u\rangle$ to become a normalised state.
Normalised states form a closed convex set. Convex combinations
of states  correspond to probabilistic mixture: for $0 \leq p \leq 1$,
$p\omega_{1}+(1-p)\omega_{2}$ can be interpreted as the state obtained by preparing $\omega_{1}$ with probability $p$ and preparing $\omega_{2}$ with probability $1-p$.

A \emph{measurement} $M$ is a finite set of effects that sum to the unit effect:
$M=\left\{ e_{i}\in C^{*} \mid \sum_{i}e_{i}=u\right\} $.
Note that any effect $e\in C^{*}$ , $e\neq0$ can be rescaled so that $\{ \lambda e,  u - \lambda e\}$ is a measurement since $u$ is an interior point. 

The above construction allows one to study one-way communication scenarios as follows.
One party, Alice, prepares
a (normalised) state $\omega$ and sends it to another party, Bob, who
carries out a measurement $M$ on the state. The probability that
Bob obtains outcome $i$ is 
\[
P(i|\omega)=\langle e_{i},\omega\rangle\ .
\]
These are indeed probabilities since from the definitions, $P(i|\omega)\geq 0$
and $\sum_{i}P(i|\omega)=1$.

A \emph{Generalised Probabilistic Theory}, in the simple form used here, is therefore defined using the above construction by a proper cone $C\subset\mathbb{R}^{n}$ and a unit $u \in C^*$. We denote it GPT($C$,$u$).

\subsection{Classical theory. }

Classical theory corresponds to the case where the cone $C=\left\{ x\in\mathbb{R}^{n} \mid \forall i : x_{i}\geq0\right\} $
is the nonnegative orthant and the unit effect is $u=\left(1,1,\ldots,1\right)^\intercal$.
Then the normalised states belong to the simplex $\Delta_n = \{x \in \mathbb{R}^n \mid \forall i : x_{i} \geq 0, \sum_{i}x_{i}=1\}$.
This simplex has $n$ extreme points $\omega_{i}=\left(0,0,\dotsc ,1,0,\dotsc ,0\right)^\intercal$.
Any normalised state $\omega$ has a unique decomposition as a convex
combination of the extreme points $\omega=\sum_{i}p_{i}\omega_{i}$
with $p_{i}\geq0$ for all $i$ and $\sum_{i}p_{i}=1$. One can therefore view $\omega$ as a probability distribution over the extreme points.

The dual cone $C^{*}$ is also the nonnegative orthant. There is a canonical
measurement with effects $e_{i}=\left(0,0,\ldots,1,0,\ldots,0\right)^\intercal$, $i=1,\ldots,n$. The probability that one gets result $i$ in state $\omega$ is $p_{i}$, i.e. the probability that the system was in extreme point $\omega_{i}$.

\subsection{Quantum theory}

Quantum theory corresponds to the case where the cone $\PSDcone=\PSDcone^{*}$ is the set of positive semidefinite hermitian matrices. If the Hilbert space dimension is $d$ (over the complex numbers), then $\PSDcone$ is a proper full-dimensional cone in the space of all $d \times d$ matrices (this time over the reals). Thus $n = d + 2 {d \choose 2} = d^2$ here. The scalar product can be written as $\langle\omega,e\rangle = \Tr \left(\omega e\right)$. The unit effect is the identity matrix $u=I$. A state $\omega\in \PSDcone$ is normalised if $\Tr(\omega)=1$. The extreme states are called \emph{pure states}. They correspond to rank 1 positive semidefinite matrices with unit trace. A measurement $M=\left\{ e_{i}\in \PSDcone \mid \sum_{i}e_{i}=I\right\}$ is called a Positive Operator Valued Measure (POVM).

\subsection{GPT based on the completely positive and copositive cones}

Let $\mathbb{S}^d$ denote the set of all $d \times d$ real symmetric matrices. The \emph{cone of completely positive matrices} is the set of matrices
\begin{equation}
\completelypositive_d=\left\{X \in \mathbb{S}^d \mid X=\sum_k z_k z_k^\intercal \mbox{ for a finite set } \{z_k\in\mathbb{R}^{d}_+\}
\right\} \ .
\end{equation}
It can be thought of as the restriction of quantum theory to states with real nonnegative coefficients (in a preferred basis), since any matrix in $\completelypositive_d$ is the convex combination of pure states with nonnegative real coefficients.

Its dual (relative to the scalar product $\langle X, Y\rangle = \Tr(XY)$) is the \emph{cone of copositive matrices}
\begin{equation}
\copositive_d =\left\{X \in \mathbb{S}^{d} \mid  z^\intercal  X z \geq 0 \mbox{ for all } z\in\mathbb{R}^{d}_+
\right\}\ .
\end{equation}

We can take the unit effect to be the unit matrix $u=I=\sum_{i=1}^n e_i e_i^\intercal$. The normalised pure states are of the form $X=(X_{ij}) = (p_i p_j) = pp^\intercal$ with $\sum_i p_i^2=1$.
Alternatively we could take the unit effect to be
$u=J$, the matrix with all entries $1$. Let $f= (1,\dotsc,1)^\intercal$. Since $J=f f^\intercal$ it is also an (unnormalised) state. In this case the normalised pure states are of the form $X=(X_{ij}) = (p_i p_j) = pp^\intercal$ with $p = (p_1,\dotsc,p_d)^\intercal$ a probability distribution. Note that we have $\completelypositive_d\subset \copositive_d$, i.e. the state space is strictly smaller than the effect space.

The dual of the copositive cone is the completely positive cone (as implied by the traditional notation used above). Hence we could also consider the dual theory where $\copositive_d$ constitutes the state space and $\completelypositive_d$ the set of measurements. The unit effect could be taken to be $I$ or $J$ as above. In this case the effect space is smaller than the state space. (However it is the case where the set of states is $\completelypositive_d$ that will interest us here).

\section{Extremal measurements}

\subsection{Refining measurements}

A measurement $M'$ is a {\em refinement} of measurement $M$ if 
\begin{eqnarray}
M&=&\left\{ e_{i}\in C^{*} \mid i=1,\ldots,m,\ \sum_{i}e_{i}=u\right\} \nonumber\\
M'&=&\left\{ f_{i,j}\in C^{*} \mid i=1,\ldots,m,\ j=1,\ldots,\ell_i,\ \sum_{i,j}f_{i,j}=u\right\}\nonumber\\
&\mbox{and}&
\forall i = 1, \ldots, m : e_i=\sum_j f_{i,j}\ .
\end{eqnarray}
That is, $M'$ is a refinement of $M$ if  carrying out measurement $M$ is equivalent to carrying out measurement $M'$ and then forgetting part of the information contained in the outcome (in the notation above, forgetting the label $j$ of the outcome $i,j$, and keeping only label $i$). Carrying out the refined measurement $M'$ will in general provide more information than carrying out the original measurement $M$.

An {\em extremal} vector $v$ in a proper cone $C$ is defined by the property that for any $w\in C$, $w \leq_C v$ implies $w = \lambda v$ for some $\lambda \geq 0$, where we use the notion of generalized inequality: if $x,y \in C$, $x \leq_C y$ iff $y-x \in C$. Equivalently, vector $v \in C$ is extremal if and only if $v = w + z$ with $w, z \in C$ implies $w = \lambda v$ for some $\lambda \geq 0$. 

\begin{lem}\label{lem:Ref1}
Any measurement $M$ can be refined to a measurement $M'$ whose effects are all extremal vectors of $C^*$.
\end{lem}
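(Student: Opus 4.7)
The plan is to refine each effect $e_i$ separately by expressing it as a nonnegative combination of extremal vectors of $C^{*}$, and then collect the resulting pieces into one refined measurement $M'$. Concretely, if for each $i$ we can write $e_i = \sum_{j=1}^{\ell_i} f_{i,j}$ with every $f_{i,j}\in C^{*}$ extremal, then setting $M' = \{f_{i,j} \mid i=1,\dots,m,\ j=1,\dots,\ell_i\}$ immediately gives $\sum_{i,j} f_{i,j} = \sum_i e_i = u$, so $M'$ is a valid measurement, and by construction it refines $M$ in the sense of the definition above.

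Thus the whole content of the lemma is the existence of a \emph{finite} decomposition of an arbitrary vector $e_i \in C^{*}$ into extremal vectors of $C^{*}$. Since $C^{*}$ is again a proper (closed, pointed, full-dimensional) cone in $\mathbb{R}^n$, this is a standard consequence of the Minkowski--Carath\'eodory theorem for convex cones: every element of a pointed closed convex cone in $\mathbb{R}^n$ lies in the conic hull of its extreme rays, and by Carath\'eodory one may in fact take at most $n$ of them. Applying this to each $e_i$ yields the required decomposition $e_i = \sum_j f_{i,j}$ with each $f_{i,j}$ lying on an extreme ray of $C^{*}$, i.e.\ extremal in the sense defined just before the lemma. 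Note that the zero decomposition $\ell_i = 0$ is allowed for $e_i = 0$ and causes no difficulty.

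The main (and really only) obstacle is justifying the Minkowski--Carath\'eodory step: for general (non-polyhedral) proper cones like the PSD cone, the set of extreme rays is uncountable, and a priori one might worry about needing an integral rather than a finite sum. The key point to invoke is that pointedness of $C^{*}$ means there exists a hyperplane $H = \{x \mid \langle x,\omega_0\rangle = 1\}$, for some strictly interior state $\omega_0 \in \mathrm{int}\,C$, such that $B := C^{*} \cap H$ is a compact convex \emph{base} of $C^{*}$. Every nonzero element of $C^{*}$ is a positive multiple of an element of $B$, and the extreme points of $B$ correspond exactly to the extremal rays of $C^{*}$. Applying the finite-dimensional Carath\'eodory theorem to the compact convex set $B \subset \mathbb{R}^n$ then produces, for each $e_i$, a finite convex decomposition into extreme points of $B$, which after rescaling becomes the desired finite decomposition of $e_i$ into extremal vectors of $C^{*}$. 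Assembling these decompositions as above produces the refinement $M'$ and completes the proof.
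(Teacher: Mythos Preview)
Your proposal is correct and follows essentially the same approach as the paper: decompose each effect $e_i$ as a finite nonnegative combination of extremal vectors of $C^*$ (the paper invokes Krein--Milman plus Carath\'eodory, you spell this out via a compact base of the pointed cone), and then reassemble these into the refined measurement $M'$. Your argument is in fact more detailed than the paper's one-line justification, but the underlying idea is identical.
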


This follows immediately from the Krein-Milman theorem~\cite{Barvinok} adapted to cones that states that any vector in a cone $C\subset\mathbb{R}^n$ can be written as a nonnegative combination of extremal vectors. Moreover, by Caratheodory's theorem~\cite{Barvinok}, $v$ can in fact be written as a nonnegative combination of at most $n$ extremal vectors. 

\subsection{Convex combinations of measurements}

A measurement $M$ is \emph{a convex combination} of the measurements $M_1$ and $M_2$ if all the measurements have the same number of outcomes, and if the effects of measurement $M$ are convex combinations with fixed weights of the effects of measurements $M_1$ and $M_2$.
More precisely, if
\begin{eqnarray}
M&=&\left\{ e_{i}\in C^{*} \mid i=1,...,m\  ,\ \sum_{i}e_{i}=u\right\} \nonumber\\
M_1&=&\left\{ f_{i}\in C^{*} \mid i=1,...,m\ ,\ \sum_{i}f_{i}=u\right\} \nonumber\\
M_2&=&\left\{ g_{i}\in C^{*} \mid i=1,...,m\ ,\ \sum_{i}g_{i}=u\right\} \nonumber\
\end{eqnarray}
then we write $M= p M_1 + (1-p) M_2$, $0< p<1$, if for all $i$
\begin{equation}
e_i=p f_i + (1-p)g_i \ .
\end{equation}
Note that some of the $e_i, f_i, g_i$ may be equal to zero.

Operationally this means that the measurement $M$ can be realized by carrying out measurement $M_1$ with probability $p$ and measurement $M_2$ with probability $1-p$, and then keeping only the label of the outcome, but forgetting which of the two measurements was in fact realized.  Carrying out measurement $M_1$ with probability $p$ and measurement $M_2$ with probability $1-p$ will in general provide more information than carrying out the original measurement $M$.

\subsection{Extremal measurements}

\begin{thm}\label{thm:Measure1}
Any measurement $M = \left\{ e_{i}\in C^{*} \mid i = 1, \ldots, m,\ \sum_{i}e_{i}=u\right\}$ can be refined to a measurement $M'$ such that $M'$ can be written as the convex combination of measurements each of which has at most $n$ nonzero effects which are all extremal vectors of $C^*$.
\end{thm}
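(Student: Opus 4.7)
The plan is to first apply Lemma~\ref{lem:Ref1} to refine $M$ into a measurement $M'=\{f_k\}_{k=1}^N$ whose effects are all extremal vectors of $C^*$ and satisfy $\sum_k f_k = u$ (I flatten the double index $(i,j)$ of the refinement into a single index $k$). The remaining task is to write this $M'$ as a convex combination of measurements each having at most $n$ nonzero extremal effects.

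To this end I would introduce the polyhedron
\[
P \;=\; \bigl\{\, a \in \mathbb{R}_+^N \;\bigm|\; \textstyle\sum_{k=1}^N a_k f_k = u \,\bigr\},
\]
which contains the all-ones vector $\mathbf{1}$ by construction. Two properties of $P$ would then be established: (i) $P$ is bounded, hence a polytope, and (ii) every vertex of $P$ has at most $n$ positive coordinates. For (i), any recession direction $a \in \mathbb{R}_+^N$ of $P$ satisfies $\sum_k a_k f_k = 0$; since $C^*$ is pointed (as the dual of the proper cone $C$) and each $f_k$ is a nonzero element of $C^*$, this forces $a_k = 0$ for all $k$, so the recession cone of $P$ is trivial. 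For (ii), a vertex of $P$ is a basic feasible solution, and its positive coordinates are indexed by linearly independent columns among the $f_k \in \mathbb{R}^n$; hence there can be at most $n$ such coordinates.

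Having established these structural properties, I would expand $\mathbf{1}$ as a convex combination of vertices, $\mathbf{1} = \sum_\alpha \lambda_\alpha v^{(\alpha)}$ with $\lambda_\alpha > 0$ and $\sum_\alpha \lambda_\alpha = 1$, and define measurements $M^{(\alpha)}=\{\, v^{(\alpha)}_k f_k \,\}_{k=1}^N$. Each $M^{(\alpha)}$ is a legitimate measurement since $v^{(\alpha)} \in P$ gives $\sum_k v^{(\alpha)}_k f_k = u$; each of its nonzero effects is a strictly positive multiple of an extremal $f_k$, hence itself an extremal vector of $C^*$; and by property (ii) at most $n$ of its effects are nonzero. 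The componentwise identity $f_k = \sum_\alpha \lambda_\alpha v^{(\alpha)}_k f_k$, which follows from $\sum_\alpha \lambda_\alpha v^{(\alpha)} = \mathbf{1}$, then yields $M' = \sum_\alpha \lambda_\alpha M^{(\alpha)}$ in the sense of Section~3.2.

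The main obstacle I anticipate is the boundedness argument for $P$, which genuinely uses the pointedness of the dual cone $C^*$; without it, the set $P$ could have nontrivial recession directions and $\mathbf{1}$ would not decompose as a convex combination of vertices alone. Once boundedness and the support bound on vertices are in place, the decomposition of $\mathbf{1}$ into vertices is a standard polytope fact, and the verification that the associated $M^{(\alpha)}$ combine componentwise to $M'$ is a routine calculation.
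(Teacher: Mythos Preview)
Your argument is correct and actually somewhat cleaner than the paper's. Both proofs begin identically by invoking Lemma~\ref{lem:Ref1} to pass to a refinement with extremal effects, but they diverge in the decomposition step. The paper proceeds \emph{iteratively}: at each stage it applies Carath\'eodory's theorem to the identity $\sum_i e_i = u$ to produce coefficients $\lambda_i$ with at most $n$ nonzero entries, observes that $\lambda_{\max}>1$, and peels off the measurement $\{\lambda_i e_i/\lambda_{\max}\}$ with weight $1/\lambda_{\max}$, leaving a residual measurement with strictly fewer nonzero effects. Repeating this yields the desired convex combination after at most $m-n$ steps.

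Your route is a one-shot polytope argument: you package all admissible coefficient vectors into $P=\{a\in\mathbb{R}_+^N\mid \sum_k a_k f_k=u\}$, use pointedness of $C^*$ to rule out recession directions (so $P$ is a polytope), invoke the basic-feasible-solution characterisation to bound the support of each vertex by $n$, and then read off the decomposition of $M'$ from the vertex decomposition of $\mathbf{1}\in P$. This is more conceptual and avoids the inductive bookkeeping; it also makes transparent \emph{why} the bound is $n$ (linear independence in $\mathbb{R}^n$). The paper's approach, by contrast, is more elementary in that it only needs Carath\'eodory's theorem as a black box and never invokes the language of basic feasible solutions; it is also explicitly constructive and immediately gives the bound of at most $m-n+1$ measurements in the combination. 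Either way the result follows.
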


The analogue of this result for quantum measurements is well  known. In a Hilbert space of dimension $d$, the extremal POVM's with more than $d^2$ elements have only $d^2$ nonzero elements, and these elements are rank one projectors~\cite{DPP}. Note that if one fixes the number of outcomes of the POVM (rather than first carrying out refinement as above), then the structure of extremal POVM's is more complicated, see~\cite{DPP}.

\begin{proof}[Proof of Theorem \ref{thm:Measure1}.] 
Given any measurement $M$, we use Lemma \ref{lem:Ref1} to construct a refined measurement, all of whose effects are extremal.  
Hence from now on, we suppose that the measurement $M$ is composed of $m$ extremal effects. If the number of effects $m \leq n$, then the assertion is trivial. So assume $m>n$, and that all $e_i \neq 0$.

By Carath{\'e}odory's Theorem~\cite{Barvinok}, there are coefficients $\lambda_i \geq 0$ such that $\sum_{i} \lambda_{i} e_{i} = u$ and at most $n$ of the $\lambda_i$'s are nonzero. We denote
$$\lambda_{\mathrm{max}} = \max_i \lambda_{i}\ .$$

First we observe that $\lambda_{\mathrm{max}}>1$. To prove this, suppose that $\lambda_{\mathrm{max}}\leq 1$. Then using the notion of generalized inequality and the fact that $M$ has $m>n$ nonzero effects we have $u = \sum_i \lambda_i e_i \leq_{C^*} \sum_{i : \lambda_i \neq 0} e_{i} <_{C^*} \sum_{i} e_{i}=u$, a contradiction. We define the measurement $M_1=\left\{ f_{i}\in C^{*} \mid i=1,\ldots,m \right\}$ with effects $f_{i} =\lambda_{i} e_{i}$ if $\lambda_i \neq 0$, and $f_{i}=0$ otherwise.
Note that measurement $M_1$ has at most $n$ nonzero effects.

We define measurement $M_2=\left\{ g_{i}\in C^{*} \mid i=1,\ldots,m \right\} $ with effects $g_i = (1-\frac{1}{\lambda_{\mathrm{max}}})^{-1}\left(e_i - \frac{f_i}{\lambda_{\mathrm{max}}}\right)$.
We note that $M_2$ is a legitimate measurement, since
$g_i\in C^*$ (use $\lambda_{\mathrm{max}}>1$) and $\sum_i g_i=u$.
We note that measurement $M_2$ has at most $m-1$ nonzero effects. Indeed for all $i$ such that 
$ \lambda_{i} = \lambda_{\mathrm{max}}$ (by definition of $\lambda_{\mathrm{max}}$ there is at least one such $i$), we have $g_{i}=0$.

Finally, measurement $M$ is the convex combination of measurements $M_1$ and $M_2$ with weights
$$M= \frac{1}{\lambda_{\mathrm{max}}} M_1 + \left(1-\frac{1}{\lambda_{\mathrm{max}}}\right) M_2$$
(These weights are nonnegative since $\lambda_{\mathrm{max}}>1$).
Note that from the above construction all the effects of $M_1$ and of $M_2$ are proportional to the effects in $M$, hence the effects of $M_1$ and $M_2$ are  extremal vectors of $C^*$.

We have thus written measurement $M$ as a convex combination of two measurements, one of which has $n$ nonzero outcomes, and the other at most $m-1$ nonzero outcomes, both of which consist only of extremal vectors of $C^*$. By iterating the argument, $M$ can be written as a convex combination of measurements with at most $n$ nonzero effects, all of which are extremal vectors of $C^*$.
\end{proof}

\section{Holevo bound for GPT}

How much classical information can be stored or transmitted using states $\omega\in C$ of a generalized probabilistic theory?   The corresponding  result in quantum information states that at most one classical bit can be stored in a quantum bit. This is known as Holevo's theorem~\cite{Holevo} and it underlies much of quantum information theory. For instance, for communication complexity problems it is the benchmark that allows meaningful comparison between sending classical and quantum information \cite{Nielsen10}.

To answer this question we consider the following scenario, formulated using a GPT($C,u$): Alice receives some classical input $x$, distributed according to some probability distribution $p(x)$. She encodes it into a state $\omega(x)\in C$ which she sends to Bob. Bob carries out a measurement $M$, obtaining outcome $y$. The classical capacity of the channel is the mutual entropy $I(X;Y)$ between $x$ and $y$, maximized over the probability distribution $p(x)$, the coding $\omega(x)$ and the measurement $M$. 

The {\em Holevo capacity} of a noiseless channel defined by GPT($C,u$) is
 \[ I_H(C,u) = \max_{M,\omega(x)} I(X;Y) \]
 Operationally, it corresponds to choosing an encoding and measurement that maximises the classical capacity of the channel.

\begin{thm}\label{thm:capacity}
For a noiseless channel using GPT($C,u$) states in dimension $n$ (i.e. $C\subset \mathbb{R}^n$), the Holevo capacity is bounded by
 \[ I_H(C,u) \leq \log{n} \]
\end{thm}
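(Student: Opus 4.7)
The plan is to use Theorem~\ref{thm:Measure1} to reduce the maximization to measurements of a very constrained form, after which Shannon's classical bound $I(X;Y)\le H(Y)\le \log|\mathcal{Y}|$ finishes the job. The monotonicity of mutual information under data processing will let me replace Bob's measurement $M$ by a convenient refinement without decreasing the quantity I am trying to bound.

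Concretely, the first step is to observe that if $M'$ refines $M$ and $Y'$, $Y$ denote the respective outcomes when Alice sends $\omega(x)$, then $Y$ is a deterministic function of $Y'$ (forget the extra refinement label), so $X \to Y' \to Y$ forms a Markov chain and $I(X;Y)\le I(X;Y')$. Thus I may replace $M$ by the refinement provided by Theorem~\ref{thm:Measure1}, which has the form $M' = \sum_k q_k M_k$ with each $M_k$ having at most $n$ nonzero effects. The second step is to realize this convex combination operationally: Bob draws an index $Z=k$ with probability $q_k$ independently of $X$, applies $M_k$, and records the outcome $Y_Z$; keeping the additional label $Z$ can only refine the procedure further, so $I(X;Y')\le I(X;Z,Y_Z)$. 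Because $Z$ is independent of $X$, the chain rule collapses this to $I(X;Z,Y_Z)=I(X;Y_Z\mid Z)$, which is the average over $k$ (with weights $q_k$) of $I(X;Y_Z\mid Z=k)$.

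The final step bounds each conditional term: given $Z=k$, the outcome $Y_Z$ takes at most $n$ values with positive probability because $M_k$ has at most $n$ nonzero effects, so $I(X;Y_Z\mid Z=k)\le H(Y_Z\mid Z=k)\le \log n$. Averaging preserves this bound, and chaining the inequalities backwards yields $I(X;Y)\le \log n$ for every choice of distribution $p(x)$, encoding $\omega(x)$, and measurement $M$; taking the supremum gives $I_H(C,u)\le \log n$. I do not anticipate a serious obstacle here: Theorem~\ref{thm:Measure1} has already done the geometric heavy lifting, and what remains is routine information-theoretic accounting. The one point that merits a sentence of care is that the two-step simulation of $M'$ requires $Z$ to be generated by Bob's local randomness, independently of Alice's classical input $X$, which is unproblematic since Bob has no access to $X$ in any case.
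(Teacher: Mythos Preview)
Your proposal is correct and follows essentially the same approach as the paper: refine via Theorem~\ref{thm:Measure1}, invoke data processing for the refinement step, handle the convex combination of measurements, and bound each component by $\log n$ since it has at most $n$ nonzero outcomes. The only cosmetic difference is that where the paper cites convexity of mutual information in the channel to pass from $M'=\sum_k q_k M_k$ to the average $\sum_k q_k I(X;Y\mid M_k)$, you instead introduce the auxiliary variable $Z$ and use the chain rule with $I(X;Z)=0$; these are two standard ways of expressing the same inequality.
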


\begin{proof}
Denote by $I(X;Y\vert \{\omega(x)\}, M)$ the capacity of the channel for  encoding $\omega(x)$  and measurement $M$. 
It follows from the data processing inequality for mutual information \cite{Cover-Thomas} that if $M'$ is a refinement of $M$, then 
$I(X;Y\vert \{\omega(x)\}, M)\leq I(X;Y\vert \{\omega(x)\}, M')$.
It also follows from convexity of mutual information \cite{Cover-Thomas} that 
if measurement $M= p M_1 + (1-p) M_2$ is the convex combination of measurements $M_1$ and $M_2$, then 
$I(X;Y\vert M) \leq p I(X;Y\vert M_1) + (1-p) I(X;Y\vert M_2)$.
Finally note that if  measurement $M$ has $k$ nonzero effects, then $I(X;Y\vert M) = H(Y) - H(Y|X) \leq \log k$. The result then follows from Theorem \ref{thm:Measure1} and the fact that the encoding and measurement can be arbitrarily chosen. 
\end{proof}

\section{Randomized one-way communication complexity with nonnegative outputs and cone factorisation of matrices}\label{1wayCC}

Consider the following communication complexity problem. Two parties, Alice and Bob, have to implement randomized computation of a function $f(x,y)$ through communication given inputs $x$ for Alice and $y$ for Bob. The expectation value of the random function $f$, is required to satisfy $E(f(x,y))= C_{xy}$, where $(C_{xy})$ is the communication matrix (not to be confused with the cone $C$). Assuming that Alice and Bob share a noiseless channel for states in some GPT($C,u$), we can frame the problem as follows. Alice receives input $x\in\left\{ 0,1\right\} ^{k}$ and Bob receives input $y\in\left\{ 0,1\right\} ^{\ell}$. Upon receiving her input, Alice prepares a normalised state $\omega(x)\in C$ that she sends to Bob. Bob carries out a measurement $M(y)=\left\{ e_{i}(y)\in C^*\right\}$ on the state prepared by Alice. He obtains result $i$ with probability $P(i)=\langle \omega(x),e_{i}(y)\rangle$. Bob then produces an output $r(i,y)$ that depends on the result $i$ of his measurement and on $y$.
We require  that the result output by Bob is always nonnegative: $r(i,y)\geq 0$. We further require that the expectation of the outputs, 
\begin{equation}
E(r\vert xy)=\sum_{i}r(i,y)\langle \omega(x),e_{i}(y)\rangle = C_{xy}\ ,
\label{Exy}
\end{equation}
where $C_{xy}\geq 0$ is the communication matrix.

Let us relax the constraints on Alice slightly, and allow Alice to send to Bob subnormalised state $\langle u,\omega(x)\rangle \leq 1$. Physically, this can be done by providing Alice with an extra classical communication channel of capacity $1$ bit. Alice then sends the state $\omega(x)$ to Bob, and uses the extra bit to tell Bob whether he must output $0$, or carry out the procedure outlined above. 

\begin{thm}\label{thm:conefact} Consider the one-way communication complexity problem described above based on a GPT($C,u$), in which Alice sends possibly subnormalised states $\omega(x)\in C$, Bob carries out measurements $M(y)=\left\{ e_{i}(y)\in C^*\right\} $, and Bob's outputs $r(i,y)\geq 0$ are nonnegative. Then Alice and Bob can produce as expected output the communication matrix $C_{xy} \geq 0$ if and only if there is a \emph{cone factorisation} of the communication matrix on cones $C$ and $C^*$:
\begin{equation}
C_{xy}=\langle \tilde\omega(x),r(y)\rangle
\end{equation}
with $\tilde\omega(x)\in C$, $r(y)\in C^{*}$.
\end{thm}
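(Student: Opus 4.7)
The plan is to prove both directions by making the correspondence between the protocol's ingredients and the factors of the factorisation explicit.

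For the "only if" direction (protocol gives factorisation), I would simply rewrite the expectation formula. Starting from
\begin{equation*}
C_{xy} = \sum_i r(i,y)\langle \omega(x), e_i(y)\rangle = \left\langle \omega(x),\, \sum_i r(i,y) e_i(y)\right\rangle,
\end{equation*}
I set $\tilde\omega(x) := \omega(x)$ and $r(y) := \sum_i r(i,y)\, e_i(y)$. Since $\omega(x) \in C$ and since each $e_i(y) \in C^*$ with nonnegative coefficients $r(i,y) \geq 0$, and $C^*$ is a convex cone (closed under nonnegative combinations), $r(y) \in C^*$. This gives the required factorisation.

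For the "if" direction (factorisation gives protocol), the idea is to realise each row $\tilde\omega(x)$ as a subnormalised state that Alice sends, and each column $r(y)$ as the "interesting" effect of a two-outcome measurement for Bob. Concretely, put $\alpha := \max_x \langle u, \tilde\omega(x)\rangle$ (finite since the input sets are finite) and let Alice send the subnormalised state $\omega(x) := \tilde\omega(x)/\alpha \in C$, which satisfies $\langle u, \omega(x)\rangle \leq 1$. For each $y$, if $r(y) = 0$ then $C_{xy}=0$ for all $x$ and Bob can output $0$. Otherwise, since $u$ is an interior point of $C^*$, there exists $\beta(y) > 0$ such that $u - \beta(y) r(y) \in C^*$, so
\begin{equation*}
M(y) = \bigl\{\, e_1(y) := \beta(y) r(y),\ \ e_2(y) := u - \beta(y) r(y)\, \bigr\}
\end{equation*}
is a legitimate two-outcome measurement. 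Bob outputs $r(1,y) := \alpha/\beta(y) \geq 0$ on outcome $1$ and $r(2,y) := 0$ on outcome $2$. A direct computation gives
\begin{equation*}
E(r \mid xy) = \frac{\alpha}{\beta(y)} \cdot \beta(y)\,\langle \omega(x), r(y)\rangle = \alpha \cdot \frac{1}{\alpha}\,\langle \tilde\omega(x), r(y)\rangle = C_{xy},
\end{equation*}
as required.

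The only technical point is really the choice of $\beta(y)$; this is where the assumption that $u$ lies in the interior of $C^*$ is essential, because it is what guarantees that some positive rescaling of any $r(y) \in C^*$ stays below $u$ with respect to the generalised inequality $\leq_{C^*}$. The rescaling of the state by $\alpha$ also has an operational cost, namely the one extra classical bit of capacity described just before the theorem statement (so that Alice can tell Bob when to output $0$ versus when to carry out $M(y)$); this is why the relaxation to subnormalised states is a natural setting for the equivalence. Aside from this scaling bookkeeping, no deeper argument is needed: the proof is essentially a dictionary between protocol objects $(\omega(x), e_i(y), r(i,y))$ and factorisation objects $(\tilde\omega(x), r(y))$.
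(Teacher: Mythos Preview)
Your proof is correct and follows essentially the same approach as the paper's: in the forward direction you aggregate Bob's effects into $r(y)=\sum_i r(i,y)e_i(y)\in C^*$, and in the converse you rescale $\tilde\omega(x)$ to a subnormalised state and turn $r(y)$ into one branch of a two-outcome measurement using that $u$ is interior to $C^*$. The only cosmetic differences are that the paper chooses a single uniform constant $\mu$ with $\mu u \geq_{C^*} \lambda r(y)$ for all $y$ (so Bob's output value does not depend on $y$), whereas you allow a $y$-dependent $\beta(y)$; and the paper re-expresses the subnormalisation via the extra classical bit while you work directly with subnormalised $\omega(x)$ as in the theorem statement.
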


This result generalises the link between nonnegative factorisation of a matrix and classical communication complexity problem~\cite{Yannakakis,Faenza}, and between PSD factorisation and quantum communication complexity~\cite{Fiorini}, to arbitrary cones and GPT communication complexity. Note that this result is independent of the choice of unit $u\in C^*$, as long as $u$ is an interior point of $C^*$.

\begin{proof}[Proof of Theorem \ref{thm:conefact}]
Let us denote by $b \in\{0,1\}$ the extra bit sent by Alice, and by $p(b\vert x)$ the probability that the bit is $0$ or $1$.
The average output given inputs $x,y$ is then
$E(r\vert xy)=p(b=1\vert x) \sum_{i}r(i,y)\langle \omega(x),e_{i}(y)\rangle$.
Let us define $\tilde \omega(x)=p(b=1\vert x) \omega(x) \in C$ and $r(y)=\sum_{i}r(i,y)e_{i}(y)\in C^*$ . Then producing the communication matrix $C_{x,y}$ implies that we can write $C_{xy}=\langle \tilde\omega(x),r(y)\rangle$.

Conversely, suppose we can write $C_{xy}=\langle \tilde\omega(x),r(y)\rangle$ with $\tilde\omega(x)\in C$, $r(y)\in C^{*}$. Then there exists  $\lambda>0$ such that we can write $C_{xy}=\langle \omega(x),\lambda r(y)\rangle$ with $\lambda>0$ and $\omega(x)=\tilde\omega(x)/\lambda$ such that $\langle u,\omega(x)\rangle\leq1$. We identify $\omega(x)$ as the subnormalised states sent by Alice on input $x$.

Furthermore, since $u$ is an interior point of $C^*$, there exists $\mu>0$ such that $\mu u \geq_{C^*} \lambda r(y)$ for all $y$ (where we use the notion of generalised inequality). This is equivalent to stating that the two vectors $e_1(y)=\lambda r(y)/\mu$ and $e_0(y)=u - \lambda r(y)/\mu$ both belong to $C^*$ for all $y$.  We then consider the measurement $M(y)=\left\{ e_{0}(y), e_1(y)\right\} $ and outputs $r(0,y)=0$, $r(1,y)= \mu/\lambda$.
We  write
$C_{xy}=\sum_{i=0,1} r(i,y) \langle \omega(x),e_i(y) \rangle$, which is of the form eq. (\ref{Exy}), thereby proving the assertion.
\end{proof}

In the next section we discuss one-way communication complexity when the states belong to the completely positive cone, and the effects to the copositive cone. We exhibit a communication matrix, specifically, the slack matrix of the correlation polytope $\COR(n)$, that can be realised by sending states in $\completelypositive_d$ with $\log d=O(\log n)$. On the other hand this problem requires $\Omega(n)$ classical bits of communication. It is highly plausible that it also cannot be achieved using a logarithmic number of quantum bits of communication (the contrary would come close to proving that P $=$ NP, as we discuss below). 

\section{Polytopes}

\subsection{Conic extensions of polytopes}

A polytope $P \subseteq \mathbb{R}^d$ can be described either as the convex hull of a finite set of points $V = \{v_1, \ldots, v_m\} \subseteq \mathbb{R}^d$ or as the set of solutions of a finite system of linear inequalities $Ax \geq b$, where $A\in \mathbb{R}^{n \times d}$, $b\in \mathbb{R}^{n}$, provided that this set of solutions is bounded (see \cite{Ziegler95} for a thorough treatment of the subject). Thus $P$ has the following inner and outer descriptions:
$$
P = \conv(V) = \{x\in \mathbb{R}^d \mid A x \geq b\}.
$$

The slack matrix $S \in \mathbb{R}^{m\times n}$ of the polytope with respect to the above descriptions is the matrix obtained by computing by how much each vertex satisfies each inequality, i.e., it is given by $S_{ij}=A_{j} v_{i}-b_j$, where $A_j$ is the $j$th row of $A$. By definition all elements of a slack matrix are nonnegative. (We remark that, compared to some previous work, in particular~\cite{Fiorini}, we work here with a transposed slack matrix here ---this turns out to be more natural in the present context.)

Let $C \subseteq \RR^k$ be an arbitrary closed convex cone and $C^*$ its dual cone. A \emph{conic extension} of the polytope $P$ is a set 
$Q=\{(x,y)\in\RR^{d+k} \mid Ex+Fy=g,\ y\in C\}$  where $E\in\RR^{p\times d},F\in\RR^{p\times k},$ and $g\in\RR^p$ such that the projection of $Q$ into the $x$-space equals $P$:
$$
\{x\in\RR^d\mid\exists y \in \RR^k : (x,y)\in Q\}=P.
$$
The extension is called \emph{proper} if the affine subspace defined by $Ex+Fy=g$ contains some interior point of $C$.

Given cone $C$, the existence of a conic extension of the polytope $P$ is essentially equivalent to the existence of a cone factorisation of the slack matrix of $P$ \cite{Gouveia}, in the following sense. 

\begin{thm}[Gouveia, Parrilo and Thomas~\cite{Gouveia}] \label{thm:GPT}
Let $P$ be a polytope that is neither empty or a point, and $S$ be any slack matrix of $P$.

\begin{itemize}
\item If $P$ admits a proper conic extension with respect to cone $C$, then its slack matrix $S$ admits a cone factorisation on cones $C$ and $C^*$. 
\item Conversely, if $S$ admits a cone factorisation on $C$ and $C^*$ then $P$ admits a (non-necessarily proper) cone extension with respect to $C$.
\end{itemize}
\end{thm}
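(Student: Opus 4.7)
The plan is to prove the two implications separately; the reverse direction (factorisation $\Rightarrow$ extension) is a direct construction, while the forward direction (extension $\Rightarrow$ factorisation) reduces to strong duality for conic linear programs.

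For the reverse direction, suppose $S = TU$ with rows $T_i$ of $T$ lying in $C$ and columns $U^j$ of $U$ lying in $C^*$. I would take as candidate extension
\[
Q = \{(x,y) \in \RR^d \times C \mid A_j x - \langle U^j, y\rangle = b_j \text{ for all } j = 1, \dots, n\},
\]
which has the required form $\{Ex + Fy = g,\ y \in C\}$ after collecting the linear constraints. Each vertex lifts via $(v_i, T_i)$, since $\langle U^j, T_i\rangle = S_{ij} = A_j v_i - b_j$, so $P \subseteq \pi_x(Q)$; conversely any $(x,y) \in Q$ satisfies $A_j x - b_j = \langle U^j, y\rangle \geq 0$ because $y \in C$ and $U^j \in C^*$, giving $\pi_x(Q) \subseteq P$.

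For the forward direction, given a proper extension $Q = \{(x,y) \mid Ex + Fy = g,\ y \in C\}$, I would first lift each vertex: for every $v_i$ pick a $y_i \in C$ with $E v_i + F y_i = g$, and use these as the rows of $T$. For the columns of $U$, the key is a conic Farkas-type statement: for each valid inequality $A_j x \geq b_j$ there exist $\mu_j \in \RR^p$ and $z_j \in C^*$ with
\[
A_j x - b_j = \mu_j^\intercal (Ex + Fy - g) + \langle z_j, y\rangle.
\]
To prove this I would consider the conic linear program $\min\{A_j x - b_j \mid (x,y) \in Q\}$, whose optimum value is $\min_i S_{ij} \geq 0$. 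Properness of the extension supplies an interior feasible point, so Slater's condition applies and strong conic duality produces a dual optimum $\mu_j$ satisfying $E^\intercal \mu_j = A_j^\intercal$ and $z_j := -F^\intercal \mu_j \in C^*$. Evaluating at each lifted vertex $(v_i, y_i) \in Q$ then yields $S_{ij} = \langle y_i, z_j\rangle$ whenever the $j$th inequality is tight at some vertex (in particular, for facet-defining inequalities); any residual additive constant arising from a valid but non-tight inequality can be absorbed by restricting to a facet description of $P$ or by appending a single $\RR_+$ coordinate to the factorisation.

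The main obstacle I anticipate is the strong conic duality step: for arbitrary closed convex cones, duality can fail without a constraint qualification, and this is precisely the role played by the \emph{properness} hypothesis in the theorem (the conic analogue of Slater's condition). Once strong duality is in hand, the remaining verifications amount to routine bookkeeping in the factorisation $S = TU$ and the projection $\pi_x$.
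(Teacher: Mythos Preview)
The paper does not prove this theorem; it is quoted from Gouveia, Parrilo and Thomas and used as a black box, so there is no ``paper's own proof'' to compare your proposal to.

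That said, your outline is essentially the standard argument from \cite{Gouveia}. The reverse direction is correct as written. For the forward direction your use of properness as the Slater condition for the conic program $\min\{A_j x : Ex+Fy=g,\ y\in C\}$ is exactly the right mechanism, and the dual optimum $\mu_j$ with $E^\intercal\mu_j=A_j^\intercal$, $z_j:=-F^\intercal\mu_j\in C^*$ does yield $\langle y_i,z_j\rangle = A_j v_i - \mu_j^\intercal g$. However, strong duality gives $\mu_j^\intercal g=\min_{x\in P}A_j x$, not $b_j$; so the displayed identity $A_jx-b_j=\mu_j^\intercal(Ex+Fy-g)+\langle z_j,y\rangle$ is off by the constant $\min_i S_{ij}$, as you note.

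Your two proposed fixes do not close this gap as stated. Appending an $\RR_+$ coordinate replaces $C$ by $C\times\RR_+$, which is a different cone and hence does not prove the theorem. ``Restricting to a facet description'' only gives the factorisation for one particular slack matrix, whereas the statement is for \emph{any} slack matrix. The missing step is to show that a $C$-factorisation of the facet/vertex slack matrix induces one for every slack matrix: since $P$ is a bounded polytope of dimension at least $1$, any valid description $Ax\ge b$ admits strictly positive $c_j$ with $\sum_j c_j A_j=0$ and $-\sum_j c_j b_j>0$, so the all-ones vector lies in the conic hull of the columns of $S$ and therefore also factors as $\langle y_i,w\rangle$ with $w\in C^*$. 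Every slack column, being a nonnegative combination of facet slack columns and the all-ones column, then factors through $C,C^*$ as required.
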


Gouveia \emph{et al.}~\cite{Gouveia} show that the technical condition of being proper can be removed if cone $C$ is \emph{nice}, that is, $K^* + F^\perp$ is closed for all faces $F$ of $K$. For instance, it is known that the nonnegative orthant and the PSD cone are both nice.

From Theorem~\ref{thm:conefact}, it follows that it the existence of a conic extension with respect to $C$ is also (essentially) equivalent to the existence of a one-way communication protocol using states belonging to cone $C$ and effects belonging to the dual cone $C^*$, that produces as average output the slack matrix.

\subsection{Correlation polytope}

The \emph{correlation polytope}  $\COR(n)$ is defined as the convex hull of all the rank-$1$ binary symmetric matrices of size $n \times n$. In other words, 
\[
\COR(n) := \conv \{aa^{\intercal} \in \mathbb{R}^{n \times n} \mid a \in \{0,1\}^n\}.
\]

It is shown in~\cite{Fiorini} that any linear extension of the correlation polytope has size $2^{\Omega(n)}$, that is, there exists a constant $\alpha > 0$ such that any linear extension of the correlation polytope has size at least $2^{\alpha n}$. By \emph{linear extension} we mean that the cone $C$ is taken to be the nonnegative orthant $\RR^k_+$. This implies that any one-way classical communication protocol with nonnegative outputs that produces the slack matrix of $\COR(n)$ in expectation requires at least $\alpha n$ classical bits, i.e., the dimension of the space in which the classical information is coded is at least $2^{\alpha n}$.

It is reasonable to conjecture that there does not exist a semidefinite extension of the correlation polytope of size $\poly(n)$. By \emph{semidefinite extension} of size $k$ we mean that the cone $C$ is taken to be the cone of $k \times k$ (real) PSD matrices. Indeed if such a polynomial size semidefinite extension exists, and if an approximation to the coefficients defining this PSD extension could be computed in polynomial time, then there would exist a polynomial time algorithm for maximisation of a linear function over the correlation polytope (since semidefinite programming is in P). But maximising a linear function over $\COR(n)$ is NP-complete. Hence this would imply that P $=$ NP. Just the existence of a $\poly(n)$-size semidefinite extension of $\COR(n)$ would imply NP $\subseteq$ P/poly, as follows from the results of Br\"iet, Dadush and Pokutta~\cite{Briet}.

If this conjecture is true, and $\COR(n)$ does not have a $\poly(n)$-size PSD extension, then there are no quantum one-way communication protocols using quantum states belonging to a Hilbert space of size $\poly(n)$ with nonnegative outputs that produce the slack matrix of $\COR(n)$ in expectation. This follows from the relation between PSD extensions, PSD factorisation, and quantum communication given in \cite{Fiorini}

On the contrary, if one places oneself in the context of a GPT wherein states belong to the completely positive cone and effects to the copositive  cone, then there exists a one-way communication protocol with nonnegative outputs that produces in expectation the slack matrix of the correlation polytope, and in which the information is coded in a space of dimension $\poly(n)$. Hence this provides an exponential saving with respect to classical communication, and a conjectured super-polynomial saving with respect to quantum communication. Similarly to semidefinite extension, we say that the \emph{size} of a completely positive extension is $k$ if it is relative to the cone of all $k \times k$ completely positive matrices.

\begin{thm} \label{thm:COR_poly}
There exists a polynomial size completely positive extension of the correlation polytope.
\end{thm}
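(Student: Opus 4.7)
The plan is to exhibit an explicit completely positive extension of $\COR(n)$ of size $2n+1$ by invoking Burer's exact reformulation theorem~\cite{Burer09} for mixed-binary quadratic programs. The key idea is to rewrite the constraint $a\in\{0,1\}^n$ as $(a,s)\in\mathbb{R}^{2n}_+$ with $a+s=\mathbf{1}$ and $a$ binary, so that the linear constraints $\{a+s=\mathbf{1},\ a,s\geq 0\}$ trivially imply the upper bound $a_i\leq 1$; this is precisely Burer's ``key condition'' which guarantees that the natural completely positive relaxation is tight.

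Concretely, I propose the lift
\[
Q=\left\{(Y,a,s,B,S) \,:\, Z:=\begin{pmatrix}1 & a^\intercal & s^\intercal \\ a & Y & B^\intercal \\ s & B & S\end{pmatrix}\in\completelypositive_{2n+1},\ a+s=\mathbf{1},\ Y_{ii}=a_i,\ Y_{ii}+2B_{ii}+S_{ii}=1\right\}
\]
and claim that the projection of $Q$ onto the $Y$ coordinates equals $\COR(n)$. Since $Q$ is defined as the intersection of $\completelypositive_{2n+1}$ with an affine subspace (all remaining constraints are linear in the entries of $Z$), this is a conic extension of size $2n+1=\poly(n)$, as required by the theorem.

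For the easy forward inclusion I would take $Y=\sum_k p_ka_ka_k^\intercal\in\COR(n)$ with $a_k\in\{0,1\}^n$, set $s_k=\mathbf{1}-a_k$, and form the convex combination $Z=\sum_k p_k\, (1,a_k^\intercal,s_k^\intercal)^\intercal(1,a_k^\intercal,s_k^\intercal)$, which is manifestly completely positive and satisfies all affine constraints (using $a_{k,i}^2=a_{k,i}$ and $a_{k,i}s_{k,i}=0$). The reverse inclusion is the content of Burer's theorem applied to our system: it asserts that the feasible set of the CPP relaxation above is the convex hull of the rank-one integer lifts $(1,a^\intercal,(\mathbf{1}-a)^\intercal)^\intercal(1,a^\intercal,(\mathbf{1}-a)^\intercal)$ for $a\in\{0,1\}^n$; projecting onto the $Y$-block immediately yields $\COR(n)$.

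The main obstacle is therefore the reverse inclusion, which I handle by citing Burer. Re-proving it would require his intricate ``splitting'' argument that partitions an arbitrary completely positive decomposition $Z=\sum_k v_kv_k^\intercal$ with $v_k\in\mathbb{R}^{2n+1}_+$ into components each proportional to one of the finitely many binary atoms; the diagonal identities $Y_{ii}=a_i$ together with the redundancy $Y_{ii}+2B_{ii}+S_{ii}=1$ (coming from $\operatorname{diag}((a+s)(a+s)^\intercal)=\mathbf{1}$) perform most of the combinatorial bookkeeping, forcing each $v_k$ after suitable recombination to align with such an atom. That splitting is the technical heart of Burer's result and is the one nontrivial step I would import as a black box.
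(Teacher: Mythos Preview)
Your proposal is correct and follows essentially the same route as the paper: both introduce slack variables $s=\mathbf{1}-a$ to put the constraints in the form $a+s=\mathbf{1}$, $a,s\geq 0$, verify Burer's key condition $a_i\leq 1$, and then invoke Burer's theorem to obtain a completely positive extension of size $2n+1$ with exactly the same affine constraints ($a+s=\mathbf{1}$, $Y_{ii}=a_i$, and the squared constraint $Y_{ii}+2B_{ii}+S_{ii}=1$). The only cosmetic difference is that the paper phrases the reverse inclusion via ``equal optimal values for every linear objective $Q$'' whereas you cite Burer's structural statement on feasible regions directly; these are equivalent formulations of the same black box.
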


\begin{proof} 
Consider an optimization problem of the form
$$
\begin{array}{rll}
\min & x^\intercal Qx+2c^\intercal x&\\
\text{s.t.} & a_i^\intercal x=b_i &\forall i=1,\ldots,m\\
& x_j \geqslant 0 &\forall j=1,\ldots,n\\
& x_j\in\{0,1\}~&\forall j\in B
\end{array}
$$

Suppose that for every $x$ that satisfies $x_j \geqslant 0$ for all $j = 1, \ldots, n$ and $a_i^\intercal x=b_i$ for all $i=1,\ldots,m$ we have that $x_j \leqslant 1$ for all $j \in B$. Burer \cite{Burer09} has shown that, under this assumption, the above problem can be rewritten as the following conic program:
$$
\begin{array}{rll}
\min & Q \bullet X + 2c^\intercal x&\\
\text{s.t.}& a_i^\intercal x=b_i&~\forall i=1,\ldots,m\\
& a_i^\intercal X a_i=b_i^2& ~\forall i=1,\ldots,m\\
& x_j=X_{jj}&~\forall j\in B\\
&\left(\begin{array}{cc}1&x^\intercal\\x&X\end{array}\right)\in \completelypositive_{1+n}
\end{array}
$$
where $\bullet$ denotes the Frobenius product (so $Q \bullet X = \Tr(QX) = \langle Q, X \rangle$) and $C^*_{1+n}$ denotes the completely positive cone generated by the matrices $zz^\intercal$, where $z \in \RR^{1+n}_{\geqslant 0}$.

\medskip
Now, optimizing over the correlation polytope $\COR(n)$ can be modeled as the following optimization problem:
$$
\begin{array}{rll}
\min & x^\intercal Q x\\
\text{s.t.} & x_j\in\{0,1\}~&\forall j=1,\ldots,n
\end{array}
$$
where $Q = (q_{ij}) \in \RR^{n \times n}$. After doubling the size of the vector $x$, this can be rewritten in slack form as
$$
\begin{array}{rll}
\min & x^\intercal 
\begin{pmatrix}
Q & 0\\
0 & 0
\end{pmatrix}
x\\
\text{s.t.} & (e_i+e_{i+n})^\intercal x =1~&\forall i=1,\ldots,n\\
& x_j \geqslant 0 &\forall j=1,\ldots,2n\\\
& x_j\in\{0,1\}~&\forall j=1,\ldots,n,
\end{array}
$$
where $e_i$ denotes the $i$th unit vector in $\RR^{2n}$. It is easy to see that for every $x$ that satisfies $(e_i+e_{i+n})^\intercal x =1$ for all $i=1,\ldots,n$ and $x_j \geqslant 0$ for all $j=1,\ldots,2n$, we have that $0 \leqslant x_j \leqslant 1$ for all $j = 1, \ldots, n$ since these conditions are just a rephrasing of $x_i+x_{i+n}=1$, $x_i\geqslant 0$, $x_{i+n}\geqslant 0$ for $i=1,\ldots,n$. Thus the above formulation satisfies the condition for Burer's result. Therefore, we obtain an equivalent optimization program over the copositive cone:
$$
\begin{array}{rll}
\min & \begin{pmatrix}
Q & 0\\
0 & 0
\end{pmatrix} \bullet X&\\
\text{s.t.} &(e_i+e_{i+n})^\intercal x=1&~\forall i=1,\ldots,n\\
& (e_i+e_{i+n})^\intercal X (e_i+e_{i+n})=1&~\forall i=1,\ldots,n\\
& x_j=X_{jj}&~\forall j=1,\ldots,n\\
&\left(\begin{array}{cc}1&x^\intercal\\x&X\end{array}\right)\in C^*_{1+n+n}.
\end{array}
$$
Finally, this conic program can be rewritten using a new matrix variable $Y = (y_{ij})_{i,j=0,\ldots,2n} = \left(\begin{array}{cc}1&x^\intercal\\x&X\end{array}\right)$ and symmetrizing as
\begin{align}
\nonumber \text{(P)}: \quad \min &\begin{pmatrix} 0 & 0 & 0\\ 0 & Q & 0\\ 0 & 0 & 0 \end{pmatrix} \bullet Y\\
\label{eq:one} \text{s.t.} &\begin{pmatrix} 1 & 0 & 0\\ 0 & 0 & 0\\ 0 & 0 & 0 \end{pmatrix}
\bullet Y = 1\\
\label{eq:two} &\begin{pmatrix} 0 & e_i^\intercal & e_i^\intercal\\
e_i & 0 & 0\\
e_i & 0 & 0
\end{pmatrix} \bullet Y = 2 &&\forall i = 1, \ldots, n\\
\label{eq:three} &\begin{pmatrix} 0 & 0 & 0 \\
0 & e_ie_i^\intercal & e_ie_i^\intercal\\
0 & e_ie_i^\intercal & e_ie_i^\intercal
\end{pmatrix} \bullet Y = 1 &&\forall i = 1, \ldots, n\\
\label{eq:four} &\begin{pmatrix} 0 & e_j^\intercal & 0 \\
e_j & -2e_je_j^\intercal &0\\
0 & 0 & 0
\end{pmatrix} \bullet Y = 0 &&\forall j = 1, \ldots, n\\
\nonumber &Y\in C^*_{1+n+n}
\end{align}

Since this holds for every possible choice of $Q = (q_{ij})$, we see that
\begin{align*}
\COR(n) = \{Z = (z_{ij})_{i, j = 1, \ldots, n} \mid \exists Y = (y_{ij})_{i, j = 0, \ldots, 2n} :\ &z_{ij} - y_{ij} = 0 &&\forall i, j = 1, \ldots, n\\ 
&\text{\eqref{eq:one}--\eqref{eq:four}}\\
& Y \in C^*_{1+2n}\}.
\end{align*}
Therefore, the correlation polytope $\COR(n)$ has a $\poly(n)$-size completely positive extension. 
\end{proof}

We point out that the completely positive extension of $\COR(n)$ constructed in the proof of the previous theorem is know to be not proper, that is, there is no $Y$ in the interior of $C^*_{1+2n}$ satisfying \eqref{eq:one}--\eqref{eq:four}. This was observed by Burer~\cite{Burer09}. Thus Theorem~\ref{thm:GPT} does not imply the existence of a cone factorisation for the slack matrix of $\COR(n)$ over the completely positive cone $\completelypositive_{1+2n}$. We now proceed to construct such a factorisation following another route.

First, we write down the dual of program (P) above:
\begin{align*}
\text{(D)}: \quad \sup \ &\alpha + \sum_{i=1}^n 2\beta_i + \sum_{i=1}^n \gamma_i\\
\text{s.t.}\ &\begin{pmatrix} 0 & 0 & 0\\ 0 & Q & 0\\ 0 & 0 & 0 \end{pmatrix} - 
\alpha \begin{pmatrix} 1 & 0 & 0\\ 0 & 0 & 0\\ 0 & 0 & 0 \end{pmatrix}
- \sum_{i=1}^n \beta_i \begin{pmatrix} 0 & e_i^\intercal & e_i^\intercal\\
e_i & 0 & 0\\
e_i & 0 & 0
\end{pmatrix}\\
& - \sum_{i=1}^n \gamma_i \begin{pmatrix} 0 & 0 & 0 \\
0 & e_ie_i^\intercal & e_ie_i^\intercal\\
0 & e_ie_i^\intercal & e_ie_i^\intercal
\end{pmatrix}
- \sum_{j=1}^n \delta_j \begin{pmatrix} 0 & e_j^\intercal & 0 \\
e_j & -2e_je_j^\intercal &0\\
0 & 0 & 0
\end{pmatrix} \in C_{1+n+n}.
\end{align*}

Call $M = M(\alpha,\beta,\gamma,\delta)$ the left-hand side of the conic constraint above. We claim that there is a choice of the variables of (D) such that $M$ is in the \emph{interior} of $C_{1+2n}$. In order to prove this, let $\beta_i = \delta_i = 0$ and $\gamma_i = \alpha$ for all $i = 1, \ldots, n$. Moreover, choose $\alpha < 0$ such that $\lambda - \frac{\alpha}{2} \geqslant 0$ for all eigenvalues $\lambda$ of $Q$. 

It is known that any symmetric matrix $M \in \RR^{(1+2n) \times (1+2n)}$ is in the interior of the copositive cone $C_{1+2n}$ if and only if ${\xi \choose x}^\intercal M {\xi \choose x} > 0$ for every ${\xi \choose x} \in \RR^{1+2n}_{\geqslant 0}$ that is not the all-zero vector \cite{Berman73,ds08}.

With our choice of $\alpha$, $\beta$, $\gamma$ and $\delta$, we have:
\begin{align*}
{\xi \choose x}^\intercal M {\xi \choose x} =
&\sum_{i = 1}^n \sum_{j = 1}^n q_{ij} x_i x_j
- \alpha \left\|{\xi \choose x}\right\|^2
- \sum_{i=1}^n 2 \alpha \, \xi (x_{i} + x_{i+n})\\
\geqslant &
\sum_{i = 1}^n \sum_{j = 1}^n (q_{ij} - \frac{\alpha}{2} \delta_{ij}) x_i x_j
- \frac{\alpha}{2} \left\|{\xi \choose x}\right\|^2\\
\geqslant &- \frac{\alpha}{2} \left\|{\xi \choose x}\right\|^2.
\end{align*}
By choosing $\alpha < 0$ with a large enough absolute value, we see that this quantity is always strictly positive when ${\xi \choose x} \in \RR^{1+2n}_{\geqslant 0}$ is different from the all-zero vector. So $M$ is in the interior of $C_{1+2n}$. This implies that Slater's condition is satisfied for (D), therefore strong duality holds (see, e.g., \cite{BV2004}). Since (P) is a reformulation of an optimisation problem over the correlation polytope, it is bounded and has a finite optimum value. Then by strong duality, (D) has an optimum value equal to the optimum value of (P). 

Later, we will need the fact that the optimum value of (D) is attained. Unfortunately, this is in general not implied by Slater's condition for (D).\footnote{This would be implied by Slater's condition for (P), but as noted before (P) is known to lack this property~\cite{Burer09}.} We prove it by using the specific structure of (D).

\begin{lem} \label{lem:dual_attained}
The optimum value of (D) is attained.
\end{lem}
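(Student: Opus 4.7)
Strong duality, established just above via Slater's condition for (D), already gives $\mathrm{val}(\text{D}) = \mathrm{val}(\text{P})$, a finite number. What remains is to show the supremum in (D) is attained. Standard conic duality does not suffice: Slater for (D) yields attainment of the primal, while attainment of the dual would ordinarily require Slater for (P), which fails here --- as remarked --- since the extension constructed in the proof of Theorem~\ref{thm:COR_poly} is not proper. A direct, problem-specific compactness argument on maximising sequences is therefore needed.

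The first step is to record sign and magnitude constraints on feasible $(\alpha, \beta, \gamma, \delta)$ by testing copositivity of $M = M(\alpha, \beta, \gamma, \delta)$ against simple nonnegative vectors. Plugging in $z = e_0$ gives $\alpha \leq 0$ and plugging in $z = e_{n+j}$ gives $\gamma_j \leq 0$. Varying $z = e_0 + t\, e_{n+j}$ over $t \geq 0$ forces a univariate quadratic in $t$ to be nonnegative, which, via its discriminant, yields the bound $\beta_j^2 \leq |\alpha|\,|\gamma_j|$. The analogous calculation with $z = e_0 + t\, e_j$ yields a similar coupling between $\beta_j + \delta_j$ and $\alpha, \gamma_j, \delta_j, q_{jj}$. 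Altogether, $\beta$ and $\delta$ are forced to stay bounded whenever $\alpha$ and $\gamma$ are.

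Next, given a maximising sequence $(\alpha_k, \beta_k, \gamma_k, \delta_k)$, I would analyse the recession cone $R$ of the feasible set of (D). A direct check shows that $R$ is pointed and that the objective $V = \alpha + 2\sum \beta_i + \sum \gamma_i$ is $\leq 0$ on $R$. However, $R$ contains a nontrivial face $R_0$ of zero-objective directions --- for instance the symmetric direction with $\alpha = -1$, each $\beta_i = 1/n$, each $\gamma_i = -1/n$ and $\delta = 0$ can be checked directly to lie in $R_0$ --- so the feasible set is unbounded on level sets of $V$ and a naive Bolzano--Weierstrass argument fails. The idea is instead to decompose each iterate into an $R_0$-component and a transverse component, and to argue that the transverse component stays bounded: combining the elementary discriminant bounds above with a careful analysis of which copositivity constraints become binding along the sequence pins down the transverse components. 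Bolzano--Weierstrass then produces a convergent subsequence whose limit, by closedness of the copositive cone and continuity of $V$, is a feasible point of (D) attaining the supremum.

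The main obstacle is the rigorous boundedness of the transverse component. This requires a classification of the zero-objective recession directions (the tight cases of the discriminant bounds), together with an argument that every direction in which the maximising sequence could escape either lies in $R_0$ and so is removed by the decomposition, or is forbidden by one of the binding copositivity constraints derived in the first step.
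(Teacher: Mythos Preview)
Your approach diverges from the paper's. The paper argues directly that the superlevel set $\mathcal{F}_{\geqslant B}$ of feasible points of (D) with objective at least $B$ is \emph{bounded}: it evaluates $z^\intercal M z \geq 0$ at a short list of nonnegative test vectors to obtain linear inequalities in $(\alpha,\beta,\gamma,\delta)$, and then combines these with the objective bound to box in $\alpha$, then each $2\beta_i+\gamma_i$, then $\beta_i$, $\gamma_i$, and finally $\delta_i$. You instead claim the superlevel sets are \emph{unbounded}, exhibiting the direction $(\alpha,\beta_i,\gamma_i,\delta_j)=(-1,\,1/n,\,-1/n,\,0)$. Your claim is in fact correct: writing $z=(\xi,x,y)$ with $x,y\in\RR^n$, the matrix $M_0$ associated to this direction satisfies
\[
z^\intercal M_0 z \;=\; \xi^2 - \tfrac{2\xi}{n}\,\mathbf{1}^\intercal(x+y) + \tfrac{1}{n}\,\|x+y\|^2
\;=\; \bigl(\xi - \tfrac{1}{n}\mathbf{1}^\intercal(x+y)\bigr)^{2} + \tfrac{1}{n}\|x+y\|^2 - \tfrac{1}{n^2}\bigl(\mathbf{1}^\intercal(x+y)\bigr)^{2} \;\geq\; 0
\]
by Cauchy--Schwarz, so $M_0$ is positive semidefinite, hence copositive, and the direction is a genuine zero-objective recession direction of the feasible region of (D). In particular, the paper's listed inequality \eqref{eq:18}, namely $-\alpha - 4\sum_i\beta_i - 2\sum_i\gamma_i \geq 0$, cannot follow from copositivity of $M$ alone (your $M_0$ gives the value $1-4+2=-1$ there), so the paper's boundedness argument as written has a gap that your observation exposes.

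That said, your own proposal is not a proof. You correctly identify the obstacle and suggest decomposing a maximising sequence into an $R_0$-component and a transverse component, but you carry out none of the decisive steps: you do not describe $R_0$ beyond a single example, you do not fix a transverse complement, and you do not establish boundedness of the transverse part. The discriminant bound $\beta_j^2 \leq |\alpha|\,|\gamma_j|$ is valid but does not help by itself, since $|\alpha|$ and $|\gamma_j|$ both diverge along your own recession direction. As you yourself flag, the ``main obstacle'' --- classifying all of $R_0$ and proving transverse boundedness --- is precisely the content of the argument, and it remains to be done; what you have is a plausible plan, not a proof.
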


\begin{proof}
Let $\kappa \leqslant 0$ denote the optimum value of (D). Fix any bound $B < \kappa$ and consider the set $\mathcal{F}_{\geqslant B}$ of feasible solutions of (D) of value at least $B$. It suffices to prove that $\mathcal{F}_{\geqslant B}$ is bounded. Indeed, $\mathcal{F}_{\geqslant B}$ is closed and by strong duality, we know that there exists a sequence of points of $\mathcal{F}_{\geqslant B}$ whose objective value converges to $\kappa$. If $\mathcal{F}_{\geqslant B}$ is bounded, then it is compact and the sequence admits a subsequence that converges in $\mathcal{F}_{\geqslant B}$. The limit of the subsequence is an optimal solution of (D).

Let $\alpha, \beta, \gamma, \delta$ be such $M = M(\alpha,\beta,\gamma,\delta)$ is in $\mathcal{F}_{\geqslant B}$. Because the objective value of $M$ is at least $B$, we have
\begin{equation}
\label{eq:near_opt}
\alpha + \sum_{i=1}^n 2 \beta_i + \sum_{i=1}^n \gamma_i \geqslant B
\end{equation}
Since $M$ is copositive, we have $z^\intercal M z \geqslant 0$ for every nonnegative vector $z$. In particular, we have 
\begin{align}
\label{eq:12} -\alpha &\geqslant 0\\
\label{eq:13} Q_{ii} -\gamma_i + 2 \delta_i &\geqslant 0 \qquad \text{for } i = 1, \ldots, n\\
\label{eq:14} -\gamma_i &\geqslant 0 \qquad \text{for } i = 1, \ldots, n\\
\label{eq:15} -\alpha - 2 \beta_i - \gamma_i &\geqslant 0 \qquad \text{for } i = 1, \ldots, n\\
\label{eq:16} -2 \alpha - 4\beta_i - \gamma_i &\geqslant 0 \qquad \text{for } i = 1, \ldots, n\\
\label{eq:17} -\alpha - \sum_{i=1}^n 2 \beta_i - \sum_{i=1}^n \gamma_i &\geqslant 0\\
\label{eq:18} -\alpha - 2 \sum_{i=1}^n 2 \beta_i - 2 \sum_{i=1}^n \gamma_i &\geqslant 0\\
\label{eq:19} -2\alpha -4\beta_i-4\delta_i+Q_{ii}-\gamma_i+2\delta_i &\geqslant 0 \qquad \text{for } i = 1, \ldots, n.
\end{align}

By \eqref{eq:near_opt} and \eqref{eq:18}, we have
$$
\frac{\alpha}{2} \geqslant B \underbrace{- \frac{\alpha}{2} - \sum_{i=1}^n 2 \beta_i - \sum_{i=1}^n \gamma_i}_{\geqslant 0} \geqslant B
$$
and thus, by \eqref{eq:12},
\begin{equation}
\label{eq:20}
2B \leqslant \alpha \leqslant 0.
\end{equation}

By \eqref{eq:near_opt} and \eqref{eq:12},
\begin{equation}
\label{eq:21}
\sum_{i=1}^n 2 \beta_i + \sum_{i=1}^n \gamma_i \geqslant B - \alpha \geqslant B
\end{equation}
%
and by \eqref{eq:15} and \eqref{eq:20},
$$
2 \beta_i + \gamma_i \leqslant -\alpha \leqslant -2B
$$
for $i = 1, \ldots, n$. By \eqref{eq:21}, for $j = 1, \ldots, n$, 
$$
2 \beta_j + \gamma_j \geqslant B - \sum_{i \neq j} 2 \beta_i - \sum_{i \neq j} \gamma_i
\geqslant B + (n-1)2B = (2n-1) B.
$$
Summarizing, we have
\begin{equation}
\label{eq:22}
(2n-1) B \leqslant 2\beta_i+\gamma_i \leqslant -2B
\end{equation}
for $i = 1, \ldots, n$. 

Combining this with \eqref{eq:14}, we find
$$
2\beta_i \geqslant (2n-1)B
$$
for $i = 1, \ldots, n$. Now, using \eqref{eq:16}, \eqref{eq:20} and \eqref{eq:22},
$$
2\beta_i \leqslant -2\alpha-2\beta_i-\gamma_i \leqslant -4B - (2n-1)B = -(2n+3)B
$$
and thus, by \eqref{eq:22} again,
$$
\gamma_i \geqslant (2n-1)B - 2\beta_i \geqslant (2n-1)B + (2n+3)B = (4n+2)B
$$
for $i = 1, \ldots, n$.

Thus we find the following bounds:
$$
(2n-1)B \leqslant 2\beta_i \leqslant -(2n+3)B
\qquad
\text{and}
\qquad
(4n+2)B \leqslant \gamma_i \leqslant 0
$$
for $i = 1, \ldots, n$.

It is routine to obtain upper and lower bounds on $\delta_i$ for each $i = 1, \ldots, n$ from the bounds on the other variables, using \eqref{eq:13} and \eqref{eq:19}. We obtain
$$
(4n+2)B - Q_{ii} \leqslant 2\delta_i \leqslant Q_{ii} -(8n+2) B.
$$
Thus $\mathcal{F}_{\geqslant B}$ is bounded, because it is contained in a box.
\end{proof}

We will now use this to give a completely positive factorisation for the slack matrix of the correlation polytope $\COR(n)$. Let $S(n)$ denote the slack matrix of the correlation polytope $\COR(n)$, with respect to its vertices and facet-defining inequalities. For each given vertex $aa^\intercal$ of $\COR(n)$, Bob creates a completely positive matrix
$$
Y=Y(a)=
\begin{pmatrix} 1\\ a\\ \mathbf{1}-a \end{pmatrix}
\begin{pmatrix} 1\\ a\\ \mathbf{1}-a \end{pmatrix}^\intercal
$$
which clearly lies in the feasible region of (P). For a given facet-defining inequality $\sum_{i=1}^n \sum_{j=1}^n q_{ij} y_{ij} \geqslant \kappa$, Alice uses the coefficients $\alpha$, $\beta$, $\gamma$, and $\delta$ from an optimal solution of (D) ---which exists by Lemma~\ref{lem:dual_attained}--- to define a copositive matrix $M = M(\alpha,\beta,\gamma,\delta)$ for the given facet. We then have
\begin{align*}
Y \bullet M 
&= Y \bullet \left( 
\begin{pmatrix} 0 & 0 & 0\\ 0 & Q & 0\\ 0 & 0 & 0 \end{pmatrix}
- \alpha \begin{pmatrix} 1 & 0 & 0\\ 0 & 0 & 0\\ 0 & 0 & 0 \end{pmatrix}
- \sum_{i=1}^n \beta_i \begin{pmatrix} 0 & e_i^\intercal & e_i^\intercal\\
e_i & 0 & 0\\
e_i & 0 & 0
\end{pmatrix} \right.\\
&\qquad \qquad \left.
- \sum_{i=1}^n \gamma_i \begin{pmatrix} 0 & 0 & 0 \\
0 & e_ie_i^\intercal & e_ie_i^\intercal\\
0 & e_ie_i^\intercal & e_ie_i^\intercal
\end{pmatrix}
- \sum_{j=1}^n \delta_j \begin{pmatrix} 0 & e_j^\intercal & 0 \\
e_j & -2e_je_j^\intercal &0\\
0 & 0 & 0
\end{pmatrix} 
\right)\\
&= Q \bullet aa^\intercal - \left( \alpha + \sum_{i=1}^n 2\beta_i + \sum_{i=1}^n \gamma_i \right)\\ 
&= Q \bullet aa^\intercal - \kappa            
\end{align*}
where the middle equality follows from the fact that $Y$ is feasible for (P) and the last equality follows from the optimality of the dual solution given by $\alpha$, $\beta$, $\gamma$, and $\delta$. Thus $Y \bullet M$ is the entry of the slack matrix $S = S(n)$ of $\COR(n)$ corresponding to the given facet $\sum_{i=1}^n \sum_{j=1}^n q_{ij} y_{ij} \geqslant \kappa$ and given vertex $aa^\intercal$. The different matrices $Y$ (one for each vertex) and $M$ (one for each facet) give us a factorisation of the slack matrix $S(n)$ of the correlation polytope over the completely positive cone $\completelypositive_{1+2n}$. 

\subsection{Polynomially definable $0/1$-polytopes}

A polytope $P \subseteq \RR^d$ is called a \emph{$0/1$-polytope} if all its vertices are in $\{0,1\}^d$. Now fix a polynomial $p = p(d)$. Informally, we say that a $0/1$-polytope $P$ is ``$p(d)$-definable'' if there exists a predicate defining the vertex set of $P$ that is efficient in the sense that it can be implemented by a circuit of ``size'' at most $p(d)$. Formally, $0/1$-polytope $P \subseteq \RR^d$ is said to be \emph{$p(d)$-definable} if there exists a Boolean circuit $C(x,y)$ with $k+d$ inputs $x \in \{0,1\}^k$, $y \in \{0,1\}^d$, one output and at most $p(d)$ gates, such that
$$
P = P(C,x) = \conv \{y \in \{0,1\}^d \mid C(x,y) = 1\}.
$$
The idea is that the circuit $C(x,y)$ checks whether $y \in \{0,1\}^d$ is a vertex of $P$ or not, given \emph{advice} $x \in \{0,1\}^k$. The bits of $x$ give side information that is used to define the vertex set of $P$. 

For instance, the stable set polytope of a $n$-vertex graph $G$ is $2n^2$-definable because there exists a circuit $C(x,y)$ with $k = {n \choose 2}$ plus $d = n$ inputs, one output and $3 {n \choose 2}+1 \leqslant 2n^2$ gates ---$2 {n \choose 2}+1$ AND gates and ${n \choose 2}$ NOT gates--- that checks whether $y$ is the incidence vector of a stable set\footnote{Recall that a set $S$ of vertices of $G$ is said to be \emph{stable} (or \emph{independent} if no two vertices of $S$ are linked by an edge of $G$.} $S$ in $G$, given the incidence vector $x$ of the edge set of $n$-vertex graph $G$. Thus for $x = \langle G \rangle$ and $y = \langle S \rangle$, the circuit $C$ simply checks whether for each edge $ij$ of $G$ we have $i \notin S$ or $j \notin S$. 

In some cases, the advice bits are not necessary and we can let $k = 0$. For instance, for the case of the correlation polytope, one can easily design a circuit $C(y)$ for $d = n^2$ with $O(d) = O(n^2)$ gates that tests whether $y \in \{0,1\}^d$ is a binary correlation vector $y = bb^\intercal$ for some $b \in \{0,1\}^n$.

Often one does not consider single polytopes but families of $0/1$-polytopes defined in various dimensions. An example of such a family is that of all correlation polytopes, that is, $\{\COR(n) \mid n \geqslant 1\}$. We say that such a family of $0/1$-polytopes is \emph{polynomially definable} if all its members that live in $\RR^d$ are $p(d)$-definable, for the same polynomial $p(d)$. Examples of polynomially definable families of polytopes include the correlation polytopes (or cut polytopes, since they are linearly equivalent to the correlation polytopes), and many others such as travelling salesman polytopes, stable set polytopes, and so on. As a matter of fact, most commonly studied families of $0/1$-polytopes are polynomially definable, since they have the property that recognising vertices can be done efficiently.

Here, we slightly strengthen an interesting observation of Maximenko~\cite{Maximenko12} about correlation polytopes. He proves that every $p(d)$-definable $0/1$-polytope can be obtained as the projection of some face of some correlation polytope $\COR(n)$ with $n$ polynomial in $d$. In this sense, correlation polytopes are ``universal objects''. In fact, Maximenko restricts himself to the case where the predicate defining the $0/1$-polytope has no side input $x$, that is, $k = 0$. (He also uses the cut polytopes, which are linearly equivalent to the correlation polytopes, and this makes the proof a bit more complicated). We give a short proof of the unabridged version of his result, see Theorem~\ref{thm:COR-simulation}.

In order to make the proof of Theorem \ref{thm:COR-simulation} as short and transparent as possible, we assume here that the circuit $C$ is implemented using only NOR gates (with fan-in $2$ and unbounded fan-out). Recall that $\NOR(z_i,z_j) = 1$ if and only if $z_i = z_j = 0$. It is known that any circuit using standard gates (OR, AND, NOT, \ldots) can be transformed into such a circuit, with only a polynomial blow-up in size (constant blow-up for circuits with bounded fan-in). In particular, $\NOT(z_i) = \NOR(z_i,z_i)$ and we allow pairs of parallel arcs in the circuit to be able to repeat inputs.

Finally, remark that, although we do not put any explicit bound on the size of advice $x$, it is clear that the circuit $C(x,y)$ can only read at most $2p(d)$ many bits of $x$ since it has at most $p(d)$ gates.

\begin{thm}
\label{thm:COR-simulation}
Every $p(d)$-definable $0/1$-polytope $P \subseteq \RR^d$ is a projection of a face of the correlation polytope $\COR(n)$, with $n \leq d + p(d)$.
\end{thm}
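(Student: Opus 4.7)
The plan is to encode each $y\in\{0,1\}^d$ with $C(x,y)=1$ as a vertex $aa^\intercal$ of $\COR(n)$ by appending to $y$ the sequence of intermediate gate outputs of the circuit on input $y$, and then to carve out exactly the valid encodings as a face of $\COR(n)$. The central algebraic observation is that the NOR identity $a_k=(1-a_i)(1-a_j)$ is quadratic in the bits, hence becomes \emph{linear} in the entries of $Y=aa^\intercal$ that parametrise the vertices of $\COR(n)$.

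First I would absorb the advice by hardwiring the fixed $x$ into the circuit and propagating constants through the NOR gates (using $\NOR(0,z)=\NOR(z,z)$ and $\NOR(1,z)=0$). This yields a circuit $C'(y)$ with no advice inputs and at most $g\leq p(d)$ NOR gates. Setting $n=d+g\leq d+p(d)$, I identify $a\in\{0,1\}^n$ with $(y,c)\in\{0,1\}^d\times\{0,1\}^g$, placing the $\ell$th gate output at position $d+\ell$ and (by topological ordering) the circuit's output gate at position $n$.

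Next I would construct, for each NOR gate, a quadratic penalty vanishing exactly on the valid input/output triples. Writing a generic quadratic in three binary variables with six unknown coefficients and imposing the four vanishing conditions essentially forces the formula
\[
q_\ell(a)\;=\;1-a_i-a_j-a_k+a_i a_j+2\,a_i a_k+2\,a_j a_k
\]
for a gate with input positions $i,j$ and output position $k=d+\ell$; a direct check on the remaining four (invalid) triples shows $q_\ell\geq 0$ on $\{0,1\}^n$ with equality iff $a_k=\NOR(a_i,a_j)$. Adding $(1-a_n)$ to penalise the output being $0$, the function $f(a):=\sum_{\ell=1}^{g}q_\ell(a)+(1-a_n)$ is nonnegative on $\{0,1\}^n$ and vanishes precisely when $a=(y,c)$ with $c$ equal to the gate-output sequence of $C'$ on $y$ and $C'(y)=1$.

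Finally, since $f$ has degree at most $2$ and $Y_{ii}=a_i^2=a_i$ on the vertices of $\COR(n)$, $f$ lifts to a linear functional $L$ on the ambient matrix space with $L(aa^\intercal)=f(a)$; consequently $L\geq 0$ on $\COR(n)$ and $F:=\{Y\in\COR(n):L(Y)=0\}$ is a face whose vertices are exactly the $aa^\intercal$ for valid encodings $a$. The linear projection $\pi:Y\mapsto(Y_{11},\dotsc,Y_{dd})$ sends $aa^\intercal$ to $y$, so $\pi(F)=\conv\{y\in\{0,1\}^d:C(x,y)=1\}=P$, completing the proof with $n\leq d+p(d)$. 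The only non-routine step is finding $q_\ell$: one must stay within degree $2$ (higher-degree monomials cannot be linearised through $Y=aa^\intercal$) while still separating the four valid from the four invalid triples, but with six free coefficients and four vanishing conditions this is essentially forced.
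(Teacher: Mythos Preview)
Your proof is correct and essentially identical to the paper's: your penalty $q_\ell(a)=1-a_i-a_j-a_k+a_ia_j+2a_ia_k+2a_ja_k$ is exactly $1$ minus the left-hand side of the paper's NOR hyperplane~\eqref{eq:NOR}, so the face cut out by $\sum_\ell q_\ell+(1-a_n)=0$ coincides with the intersection of the paper's individual valid hyperplanes (one per gate plus~\eqref{eq:output}). The only cosmetic differences are that you hardwire the advice $x$ and propagate constants as a preprocessing step, whereas the paper keeps the advice and special-cases gates with constant inputs via~\eqref{eq:NORc0}--\eqref{eq:NORc1}, and that you aggregate all nonnegative constraints into a single affine functional rather than intersecting several hyperplanes.
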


\begin{proof}
Suppose $P = P(C,x)$ for some circuit $C(x,y)$ with at most $p(d)$ gates, and some $x \in \{0,1\}^k$. Assume that $P$ is not empty, otherwise the result is trivial. We create $n \leq d + p(d)$ new Boolean variables which we denote as $z_{i,i}$ for $i = 1, \ldots, n$. The first $d$ variables $z_{i,i}$ represent the variables $y_i$, and the last at most $p(d)$ variables represent the output values of the NOR gates. With a slight abuse of notations, we also denote the variables of the space in which $\COR(n)$ is defined as $z_{i,j}$, where $i, j \in \{1,\ldots,n\}$. Notice that none of the variables $z_{i,j}$ corresponds to the advice bits $x_\ell$, which are considered as constants on which the polytope $P$ depends.

For defining a face $F$ of $\COR(n)$ that projects to $P$, we intersect $\COR(n)$ with  $p(d) + 1$ valid hyperplanes, one for each NOR gate and one for the output of the circuit. 

For each NOR gate $z_{k,k} = \NOR(z_{i,i},z_{j,j})$ we add the equation 
\begin{equation}
\label{eq:NOR}
z_{i,i} + z_{j,j} - z_{i,j} + z_{k,k} - 2 z_{i,k} - 2 z_{j,k} = 1\ .
\end{equation}
This equation defines an hyperplane that is valid\footnote{A hyperplane is said to be \emph{valid} for a polytope $P$ if $P$ lies completely (not necessarily strictly) on one side of the hyperplane.} for $\COR(n)$ because the left-hand side is always less or equal to $1$ (recall for instance that $z_{i,i} = 1$ and $z_{j,j} = 1$ implies $z_{i,j} = 1$), and equals $1$ for a vertex of $\COR(n)$ if and only if $(z_{i,i},z_{j,j},z_{k,k}) \in \{(1,0,0),(0,1,0), (1,1,0), (0,0,1)\}$, i.e. if and only if $z_{k,k} = \NOR(z_{i,i},z_{j,j})$. 

For a NOR gate involving one constant (e.g., one of the advice bits), that is, a NOR gate of the form $z_{k,k} = \NOR(z_{i,i},c)$, we have $z_{k,k} = \NOT(z_{i,i})$ if $c = 0$ and $z_{k,k} = 0$ if $c = 1$. Gates of this type can also be easily simulated by valid hyperplanes similar to \eqref{eq:NOR}. For instance we can use:
\begin{equation}
\label{eq:NORc0}
z_{i,i} + z_{k,k} - 2 z_{i,k} = 1\ (\mbox{when }c=0)\ ;
\end{equation}
\begin{equation}
\label{eq:NORc1}
z_{k,k} = 0\ (\mbox{when }c=1)\ .
\end{equation}

The output of a NOR gate involving two constants (e.g., two advice bits) is simply considered as a new constant whose value is a function of $x \in \{0,1\}^k$.
 
Finally, assuming that $z_{n,n}$ represents the output of the circuit, we add the equation
\begin{equation}
\label{eq:output}
z_{n,n} = 1
\end{equation}
that defines a valid hyperplane.

The face $F$ is thus the intersection of $\COR(n)$ and the hyperplanes eqs. (\ref{eq:NOR},\ref{eq:NORc0},\ref{eq:NORc1},\ref{eq:output}).
From the above construction it follows that $y\in P$ (that is $C(x,y) = 1$) if and only if $y_i=z_{i,i}$ for $i = 1, \ldots, d$ and $z\in F$. Therefore, the image of $F$ by the projection to the variables $z_{i,i}$ for $i \in \{1,\ldots,d\}$ is exactly $P$.
\end{proof}

\noindent
By combining this with Theorem \ref{thm:COR_poly}, we obtain the following result.

\begin{thm}
Every polynomially definable $0/1$-polytope has a polynomial size completely positive extension.
\end{thm}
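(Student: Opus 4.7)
The plan is to combine Theorem~\ref{thm:COR_poly} with Theorem~\ref{thm:COR-simulation}, after checking that the class of polytopes admitting a polynomial size completely positive extension is closed under taking faces and under linear projections.

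Let $P\subseteq \RR^d$ be a $p(d)$-definable $0/1$-polytope. First, apply Theorem~\ref{thm:COR-simulation} to write $P=\pi(F)$ where $F$ is a face of $\COR(n)$ for some $n\leq d+p(d)$ and $\pi: \RR^{n\times n}\to \RR^d$ is the coordinate projection onto the variables $z_{i,i}$ for $i=1,\ldots,d$. Next, apply Theorem~\ref{thm:COR_poly} to obtain a completely positive extension of $\COR(n)$ of size $\poly(n)=\poly(d)$, that is, affine equations $Ex+Fy=g$ together with the conic constraint $y\in \completelypositive_{\poly(n)}$ whose projection to $x$-space is $\COR(n)$.

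Now I convert this extension of $\COR(n)$ into one for $P$ in two simple moves. The face $F$ is cut out of $\COR(n)$ by finitely many linear equations $h_1^\intercal x=\beta_1,\ldots,h_t^\intercal x=\beta_t$ (the hyperplanes exhibited in the proof of Theorem~\ref{thm:COR-simulation}); appending these equations to the system $Ex+Fy=g$ yields a completely positive extension of $F$ of the same size (the same cone $\completelypositive_{\poly(n)}$). Then, to pass from $F$ to $P=\pi(F)$, I introduce new variables $z\in\RR^d$ with the affine equations $z=\pi(x)$ (i.e.\ $z_i=x_{i,i}$ for $i=1,\ldots,d$) and existentially quantify over $x$; the resulting system
\[
\bigl\{(z,x,y)\in\RR^d\times\RR^{n\times n}\times\RR^{\poly(n)}\ \big|\ z=\pi(x),\ Ex+Fy=g,\ h_r^\intercal x=\beta_r\ \forall r,\ y\in \completelypositive_{\poly(n)}\bigr\}
\]
projects to $P$ and uses the same completely positive cone. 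This is a completely positive extension of $P$ of size $\poly(n)=\poly(d)$.

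I do not expect any step to be a real obstacle: the only things to check are the two closure properties (faces and projections preserve conic extension complexity), and both are immediate from the definition of a conic extension used in the paper, which allows arbitrary affine equations over the $(x,y)$-variables and projects onto the $x$-block. Note in particular that we do not need the extension of $\COR(n)$ to be proper, nor do we invoke Theorem~\ref{thm:GPT}; we manipulate conic extensions directly.
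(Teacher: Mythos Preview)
Your proposal is correct and follows exactly the route the paper takes: the paper's own proof is the single sentence ``By combining this with Theorem~\ref{thm:COR_poly}, we obtain the following result,'' and you have simply spelled out the two closure properties (faces via appending the valid-hyperplane equations, projections via introducing $z=\pi(x)$) that make this combination go through. Nothing is missing.
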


This proves that virtually all problems of interest in combinatorial optimisation can be expressed in an economical way as conic programs over the completely positive cone, which is striking given the large number of papers establishing this for individual problems, e.g.~\cite{QKRT98,BDKRT2000,KP02,Burer09}. Here, we consider a combinatorial optimisation problem as the task of finding in a (possibly implicitly described) collection $\mathcal{F}$ of subsets of a finite universe $U$ whose elements have weights $w(u) \in \mathbb{R}$, a set $F \in \mathcal{F}$ such that $w(F) = \sum_{u \in F} w(u)$ is maximum (minimum), which generically corresponds to maximizing (minimizing) a linear function over a $0/1$ polytope. (More general would be to optimize over integer polyhedra, that is, polyhedra whose vertices have integer coordinates; this is part of discrete optimisation.) In terms of GPTs, the theorem supplies a large number of communication problems that are easy for GPTs based on completely positive / copositive cones, but hard in the classical case.

\section*{Acknoweledgments}

We acknowledge financial support by european projects QCS and QALGO, by \emph{Fonds de la Recherche Scientifique - FNRS}, and by the \emph{Actions de Recherche Concert\'ees} (ARC) fund of the \emph{Communaut\'e fran\c{c}aise de Belgique}. Finally, we thank our colleague Sebastian Pokutta for discussions about this paper.

\end{document}